\documentclass[conference, 10pt, twocolumn]{ieeeconf}

\IEEEoverridecommandlockouts
\overrideIEEEmargins

\usepackage[usenames]{color}
\usepackage{enumerate}
\usepackage{url}
\usepackage{subfigure}
\usepackage{amsfonts,mathrsfs}
\usepackage{amssymb,amsmath}
\usepackage{verbatim}
\usepackage{acronym}
\usepackage{graphicx}
\usepackage{amsthm}
\usepackage{ifmtarg}



\def\fskip#1{}

\newtheorem{theorem}{Theorem}

\newtheorem{assumption}{Assumption}

\newtheorem{corollary}{Corollary}

\newtheorem{definition}{Definition}

\newtheorem{lemma}{Lemma}

\newtheorem{remark}{Remark}

\def\1{{\bf 1}}

\def\R{\mathbb{R}}

\newcommand{\remove}[1]{}

\begin{document}
\title{Convergence Time for Unbiased Quantized Consensus Over Static and Dynamic Networks*}
\author{\authorblockN{Seyed Rasoul Etesami, Tamer Ba\c{s}ar}
  \\ \authorblockA{Coordinated Science Laboratory, University of Illinois at Urbana-Champaign,  Urbana, IL 61801\\
     Email: (etesami1, basar1)@illinois.edu}
\thanks{*Research supported in part by the ``Cognitive \& Algorithmic Decision Making" project grant through the College of Engineering of the University of Illinois, and in part by AFOSR MURI Grant FA 9550-10-1-0573 and NSF grant CCF 11-11342. An earlier version of this paper, on static networks, was presented at the 2013 IEEE Conference on Decision and Control (CDC), and is listed as \cite{etesami2013quantized} in the References section.}
}
\maketitle
\thispagestyle{empty}
\pagestyle{empty} 
\begin{abstract}
In this paper, the question of expected time to convergence is addressed for unbiased quantized consensus on undirected connected graphs, and some strong results are obtained. The paper first provides a tight expression for the expected convergence time of the unbiased quantized
consensus over general but fixed networks. It is shown that the maximum expected
convergence time lies within a constant factor of the maximum hitting time of an appropriate lazy random walk, using the theory of harmonic functions for reversible Markov chains. Following this, and using electric resistance analogy of the reversible Markov chains, the paper provides a tight upper bound for the expected convergence time to consensus based on the parameters of the network. Moreover, the paper identifies a precise order of the maximum expected convergence time for some simple graphs such as line graph and cycle. Finally, the results are extended to bound the expected convergence time of the underlying dynamics in time-varying networks. Modeling such dynamics as the evolution of a time inhomogeneous Markov chain, the paper derives a tight upper bound for expected convergence time of the dynamics using the spectral representation of the networks. This upper bound is significantly better than earlier results for the quantized consensus problem over time-varying graphs.   
\end{abstract}

\smallskip
\begin{keywords}
Quantized consensus, convergence time, Markov chains, random walk, time varying networks, spectral representation.
\end{keywords}

\section{Introduction}
With the appearance of myriads of online social networks and availability of huge data sets, modeling of the opinion dynamics in a social network has gained a lot of attention in recent years. In a distributed averaging algorithm, agents will exchange their information and update their values based on others' opinions so that eventually they reach the same outcome. Among many problems that arise in such an application is the one of computing the average of agents' initial values. However, in many applications, due to limited memory and energy, agents' states are constrained to be discrete quantities, which leads to finite quantization of states. One of the models that involves such dynamics is the gossip quantized model, where agents' opinions are constrained to be integer valued as it was introduced in \cite{Kashyab} and \cite{franceschelli2011quantized}. The same problem without integer constraints has been studied in many forms; see, for example, \cite{Bertsekas}, \cite{lorenzc}, \cite{distributedaa}. 

\smallskip
In general computing the average and quantized average is known to be an important one in various contexts, such as multi-agent coordination and distributed averaging \cite{lorenzt,nedic2009distributed,olfati2004consensus}, e.g., when a set of robots coordinate in order to move to the same location; information fusion in sensor networks \cite{xiao2007distributed}, e.g., when every sensor in a sensor network has a different measurement of the temperature and the sensors' common goal is to compute the average temperature; task assignment \cite{fanti2013quantized}, e.g., when a group of agents has to reach a consensus on an optimal distribution of tasks, under communication and assignment constraints; decentralized voting \cite{benezit2009interval}, e.g., when nodes initially vote Yes or No and the goal is to find the majority opinion; and load balancing in processor networks, which has various applications in the operation of computer networks \cite{ghosh1994dynamic}.

\smallskip
It is known that, for any initial profile, the quantized dynamics introduced in \cite{Kashyab} will converge to the consensus set with probability 1. It is worth to note that in the case of discrete quantization the final state of the dynamics may or may not be an exact consensus among the agents. Therefore, one can generalize the definition of consensus point to a consensus set which is a set of admissible states such that the opinions of agents in those states are sufficiently close to one another. However, depending on the initial profile and the distribution of choosing the edges at each time instant, the convergence time in expectation may vary. Studies on the behavior of the quantized consensus algorithm based on natural random walks and biased random walks can be found in \cite{robot-krause,etesami2013quantized,shang2013upper}. In fact, simple rule and efficient operation time is one of the main properties of such dynamics.    

\smallskip
In this paper, we consider the quantized consensus algorithm when each edge has equal probability of being chosen at each time step (unbiased). The convergence of such dynamics to the set of quantized consensus points has been shown earlier in \cite{Kashyab}. However, an exact expression for the expected convergence time for these dynamics based on the topology of the network has not yet been given. In this paper we address this issue and provide some tight bounds on the expected convergence time based on some parameters of the underlying graph. The best upper bound for the expected convergence time of the randomized quantized consensus for fixed graphs is known to be $\mathcal{O}(n^3\log n)$ \cite{shang2013upper}, where $n$ is the number of the nodes in the network. However, this bound is given for a randomized quantized consensus with a different protocol than what we consider in this paper. In this paper, we consider unbiased quantized consensus as was introduced in \cite{Kashyab} and provide a tight upper bound of $O(nmD\log(n))$ for the expected convergence time of such dynamics over static networks, where $n$, $m$, and $D$ denote the number of nodes, the number of edges and the diameter of the network, respectively. Therefore, depending on the network topology, this bound can be better or worse than the proposed upper bound in \cite{shang2013upper}. However, an advantage of our analysis for the static networks is that, it provides both tight upper bound and lower bound within a constant factor of each other for the maximum expected convergence time of unbiased quantized consensus. This brings us to a conclusion that one cannot find a better order than what it is obtained in this paper for the maximum expected convergence speed of unbiased quantized consensus dynamics. Therefore, for faster convergence speed for the quantized consensus problem one must consider protocols other than the unbiased protocol. We take it even one step further and study the expected convergence time of such dynamics over time varying networks. In particular, we prove an upper bound for the expected convergence time in the case of time-varying networks, which is significantly tighter than the protocol given in \cite{zhu2011convergence} by a factor of at least $n^4$, where $n$ is the number of agents involved in the dynamics.      

\smallskip
The paper is organized as follows: In Section~\ref{sec:HKdyn}, we review the unbiased quantized consensus dynamics and some of its basic properties. Further, we identify some connections between the expected convergence time and random walks on the graphs. In Section~\ref{sec:mainresults}, we obtain our main results related to fixed networks, providing expected convergence time for these dynamics. Moreover, we include some simulation results for simple but fixed graphs. In Section \ref{sec:Time-Varying} we extend our results to time-varying networks, and establish a polynomial upper bound for the expected convergence time of the quantized consensus. We conclude the paper in Section~\ref{sec:conclusion}. To ease exposition of our main results in the main body of the paper, we relegate their proofs as well as some important relevant results from the theory of Markov chains and related topics to an appendix (Appendix \ref{ap-static-preliminary-proofs}). In particular we provide in Appendix I an overview of some relevant results related to comparison between random walks and electric networks. Finally, in Appendix \ref{ap-static-alternative-proofs}, we present an alternative approach to study the maximum expected convergence time of unbiased quantized consensus for the case of simple static graphs such as line graph and cycle.\\
           
\textbf{Notations}: 
For a vector $v\in \R^n$, we let $v_i$ to be the $i$th entry of $v$, and $v^T$ be the transpose of $v$. We let $\bold{1}$ be a column vector with all entries equal to 1. We say that $v$ is stochastic if $v_i\geq 0$ for all $i\in \{\1,2,\ldots, n\}$ and $\sum_{i=1}^nv_i=1$. Similarly, we say that a matrix $A$ is stochastic if each row of $A$ is stochastic. If both $A$ and $A^T$ are stochastic, we say that
$A$ is doubly stochastic. For a random walk $\mathcal{Z}$ with transition probability matrix $P$, we let the random variable $\tau^a_z$ be the first time that the random walk initiated at $a$ hits the state $z$. Also, we let $H_{\mathcal{Z}}(a,z)$ denote the expected time that the random walk $\mathcal{Z}$ initiated at $a$ hits $z$ for the first time. We take $G_{\tau^a_z}(x)$ to be the expected number of visits to $x$ before $\tau^a_z$. For an undirected graph $\mathcal{G}=(V,\mathcal{E})$, we let $N(x)$ be the set of neighbors of $x$. We also let $\mathcal{G}\times \mathcal{G}=(V\times V, \mathcal{E}')$ be the Cartesian product of $\mathcal{G}$, i.e. $\big((x,y),(r,s)\big)\in \mathcal{E}'$ if and only if $x=r, s\in N(y)$ or $y=s, r\in N(x)$. For a graph $\mathcal{G}$, we let $\mathcal{A}_{\mathcal{G}}$ be its adjacency matrix and $\mathcal{D}_{\mathcal{G}}$ be a diagonal matrix whose diagonal entries are equal to the degree of the nodes in the graph. Moreover, we denote the Laplacian of this graph by $\mathcal{L}_{\mathcal{G}}=\mathcal{D}_{\mathcal{G}}-\mathcal{A}_{\mathcal{G}}$. We let  $\mathcal{R}(x\leftrightarrow y)$ be the effective resistance between two nodes $x$ and $y$ in an electric network when every edge has resistance equal to 1. We denote by $\mathbb{V}(\cdot)$ the voltage function over an electric circuit for two distinguished nodes $a$ and $z$ with $\mathbb{V}(a)=1$ and $\mathbb{V}(z)=0$, and let $\mathbb{V}_{xy}=\mathbb{V}(x)-\mathbb{V}(y)$. Finally, we define the distribution of a vector $v$ as the list $\{(v_1, n_1),(v_2, n_2),\ldots \}$ in which $n_i$ is the number of entries of $v$ which have value $v_i$.

\smallskip
\section{Unbiased Quantized Consensus }\label{sec:HKdyn}

 In this section, we describe the discrete-time quantized consensus model for a fixed network as introduced in \cite{Kashyab} and postpone the analysis of time-varying networks to Section \ref{sec:Time-Varying}.

\smallskip
\subsection{Model}

\smallskip
\begin{itemize}
\item There is a set of $n$ agents, $V=\{1,2,\ldots,n\}$, which are connected on some undirected graph $\mathcal{G}(V,\mathcal{E})$.
\item Each agent has an initial value $x_i(0)$, which is a positive integer.
\item At each time instant $t=1,2, \ldots$, one edge is chosen uniformly at random among the set of all the edges $\mathcal{E}$,  
and the incident nodes on the sides of this edge (let us call them $i$ and $j$) update their values according to:
\begin{align}\label{eq:Quantize-rule}
x_i(t+1) = \begin{cases} x_i(t)-1, & \mbox{if} \ x_i(t)>x_j(t) \\ 
x_i(t)+1, & \mbox{if}  \ x_i(t)<x_j(t) \\
x_i(t), & \mbox{if}  \ x_i(t)=x_j(t), 
\end{cases}
\end{align}
and the same holds for agent $j$. We refer to $x_i(t)$ as the \textit{opinion of agent $i$} at time $t$ and $x(t)$ as the \textit{opinion profile} at time $t$.
\end{itemize}

\smallskip
\subsection{Relevant results on maximun expected convergence time}\label{sec:relevant}

\smallskip
It has been shown in \cite{Kashyab} that at each time instant $t$, the Lyapunov function defined by:
\begin{align}\label{eq:Lyapunov} 
V(x(t))=\sum_{\ell=1}^{n}\Big(x_{\ell}(t)-\frac{\sum_{k=1}^{n}x_k(0)}{n}\Big)^2
\end{align}
will decrease by at least 2 if a nontrivial update occurs at $t$. By nontrivial update we mean that the values of the incident nodes of the chosen edge $(i,j)$ at time instant $t$ differ by at least 2, i.e. $|x_i(t)-x_j(t)|\ge 2$. Therefore, to bound the expected convergence time for the above algorithm, one can think of the maximum expected time it takes for a nontrivial update to occur. Also, using the Lyapunov function given in \eqref{eq:Lyapunov}, one can see that the number of nontrivial updates is at most of the order of $O(n)$. In fact it has been shown in \cite{Kashyab} that $\frac{(L-l)^2}{8}n$ nontrivial updates are enough to guarantee the termination of the dynamics, where $L$ and $l$ denote, respectively, the maximum and minimum opinions at the beginning of the process. Therefore, the problem reduces to that of finding the maximum expected time it takes for a nontrivial update to take place, which we denote by $\bar{T}(\mathcal{G})$. In this case the expected convergence time of the quantized consensus will be upper bounded by $\frac{(L-l)^2}{8}n\bar{T}(\mathcal{G})$. In this paper we are interested in computing $\bar{T}(\mathcal{G})$. Let $T_1(x(0))$ be a random variable denoting the first nontrivial averaging when the initial profile is $x(0)$. Through some manipulation, it is not hard to see that $\bar{T}(\mathcal{G})=\max_{x(0)\in \mathcal{X(\mathcal{G})}}{T_1(x(0))}$ where $\mathcal{X(\mathcal{G})}=\{x| \mbox{distribution of x is} \ \{(0,1), (1, n-2), (2,1)\}\}$, i.e., $\mathcal{X(\mathcal{G})}$ is the set of all the vectors of
size $n$ in which one entry is 0, one entry is 2, and the remaining entries are 1. Therefore, the main issue is to find an expression for $\bar{T}(\mathcal{G})=\max_{x(0)\in \mathcal{X(\mathcal{G})}}{T_1(x(0))}$. 

\smallskip
\subsection{Random walk interpretation}

\smallskip
In the above setting, we now assume that all the agents on the graph $\mathcal{G}$ have value 1 except two of them which are 0 and 2. At each time instant $t$, one edge will be selected with equal probability $\frac{1}{m}$ where $m$ is the number of edges, and the incident nodes update their values based on \eqref{eq:Quantize-rule}. Therefore, we can interpret this problem in an alternative way. Consider two walkers, let us call them 0 and 2, who start a random walk on the vertices of the graph $\mathcal{G}$ whenever the selected edge is incident to at least one of them. To see this more clearly, let us consider a network of $n$ nodes such that all of the nodes have value 1 except two of them which have values 0 and 2. Therefore, in the next update of the protocol \eqref{eq:Quantize-rule}, either the selected edge is incident to neither of the nodes with values 0 or 2, in which case there will not be any change, or the selected edge is incident to the node with value 0 and one of the nodes with value 1 (similarly to node with value 2 and one of the nodes with value 1). In this case 0 and 1 on the sides of the selected edge will be swapped (analogously, 2 and 1 on the sides of the selected edge will be swapped). This can be viewed as a random walk that the nodes with values 0 and 2 take to their next positions. Therefore, $\bar{T}(\mathcal{G})$ is equal to the maximum of the expected time it takes for these two walkers to meet. Based on this interpretation we have the following definition.
\smallskip
\begin{definition}\label{def:original-process-static}
Denoting the current locations of the walkers by $x$ and $y$, if the selected edge at the next time instant is incident to one of the walkers, e.g., $\{x,x_i\}$ for some $x_i\in N(x)$, we will move that walker from node $x$ to node $x_i$, otherwise the walkers will not change their positions. We refer to such a random walk process as the \textit{original} process.
\end{definition} 

\smallskip
One important fact is that both of the walkers in the original process have the same source of randomness, which selects an edge at each time instant. Therefore, these random walks are jointly correlated. In fact, in the Section \ref{sec:mainresults}, we find an explicit form for $\bar{T}(\mathcal{G})$ for a general fixed network.    

\bigskip
\section{Main Results for static networks}\label{sec:mainresults}

\smallskip
Let us consider the original joint process for two walkers on a finite graph, meaning that either of the walkers will move depending on whether the selected edge is incident to it. In order to compute $\bar{T}(\mathcal{G})$, we introduce another process, called \textit{virtual} process, to facilitate
our analysis. 

\smallskip
We let the virtual process and the original process be the same until the time when the walkers become each other's neighbors, i.e. $x\in N(y)$, for some $x,y\in V$. At this time we count the connecting edge in the virtual process twice in our edge probability distribution. Moreover, we denote the meeting time function of the virtual process by $M(x,y)$ for every two initial states $x$ and $y$. Note that, in the virtual process, as long as the walkers are not each other's neighbors, one edge of the network is picked, with probability $\frac{1}{m}$ and the movement of the walkers will be precisely as in the original process. But if the walkers are each other's neighbors, i.e. $x\in N(y)$, the edge selection probability slightly changes. In this case, the probability of selecting an edge $\mathbb{P}(e), e\in \mathcal{E}(\mathcal{G})$ will be as follows:
\begin{align}\nonumber
\!\mathbb{P}(e) \!=\! \begin{cases} \ \frac{2}{m},\!\!\!\!\!\! & \mbox{if} \ e=\{x,y\} \\ 
\frac{1}{m},\!\!\!\!\!\! & \mbox{if} \ \mbox{$e$ is incident to either $x$ or $y$}   \\
\!\!\frac{m-d(x)-d(y)}{m(m+1-d(x)-d(y))},\!\!\!\!\! & \mbox{if} \ \mbox{$e$ is incident to neither $x$ nor $y$}, 
\end{cases}
\end{align}
and the walkers move depending on whether the selected edge is incident to them or not.

\smallskip
\begin{remark}\label{rem:star-graph}
In order to have valid transition probabilities for the virtual process, we must have $d(x)+d(y)\leq m, \forall (x,y)\in \mathcal{E}(\mathcal{G})$. This condition naturally holds for all connected graphs except the star graph and double-star graph, i.e., two star graphs whose centers are connected to each other (Figure \ref{fig:double_star}). In these cases we have $\max\{d(x)+d(y)|\ (x,y)\in \mathcal{E}(\mathcal{G})\}=m+1$. However, the maximum expected meeting time of these two special cases can be computed precisely and directly without using a virtual process (Lemma \ref{lemm:special-case-star}). Therefore, henceforth we assume that $d(x)+d(y)\leq m, \forall (x,y)\in \mathcal{E}(\mathcal{G})$.      
\end{remark}

\smallskip
Given a graph $\mathcal{G}$, we choose an edge at random and uniformly among all the set of edges at each time instance. Let $\mathcal{Z}$ be the lazy random walk which is generated based on marginal distribution of this setting. In other words, the walker will move towards one of his neighbors with equal probability if he is located at one of the incident vertices. Hence, one can interpret $\mathcal{Z}$ as a random walk which is generated based on marginal distribution of only one of the walkers in the original process. It is not hard to see that $\mathcal{Z}$ has the following transition probabilities:
\begin{align}\label{eq:transition-z}
P_{\mathcal{Z}}(x,y) = \begin{cases} 1-\frac{d(x)}{m}, & \mbox{if} \ y=x \\ 
\frac{1}{m}, & \mbox{if} \  y \in N(x) \\
0, & \mbox{else}. 
\end{cases}
\end{align}
Since the above transition matrix is doubly stochastic, $\pi=(\frac{1}{n}, \frac{1}{n}, \ldots, \frac{1}{n})^T$ is its stationary distribution. This results in $\pi_iP(i,j)=\pi_jP(j,i)\ \forall i,j\in 1,\ldots, n$, and hence $\mathcal{Z}$ is a reversible Markov chain.  

\smallskip
We already know \cite{Fundamental} that every reversible Markov chain has a hidden vertex $w$ such that the hitting time from $w$ to every state is less than or equal to the hitting time from that particular node to state $w$, i.e, $w$ is a hidden vertex for $\mathcal{Z}$ if  $H_{\mathcal{Z}}(w,x)\leq H_{\mathcal{Z}}(x,w) \ \forall x$.

\smallskip 
\begin{definition}
Assume that $w$ is a hidden vertex for the reversible Markov chain $\mathcal{Z}$. As in \cite{Fundamental} and \cite{shang2013upper}, we define the potential function $\Phi(\cdot ,\cdot):V\times V\rightarrow R$ to be 
\begin{align}\nonumber
\Phi(x,y)=H_{\mathcal{Z}}(x,y)+H_{\mathcal{Z}}(y,w)-H_{\mathcal{Z}}(w,y).
\end{align}
\end{definition}

\begin{definition}
A function $h:\Omega\rightarrow \mathbb{R}$ is called harmonic at a vertex $x\in \Omega$ for a Markov chain with transition probability matrix $P$ if $h(x)=\sum_{y\in \Omega}P(x,y)h(y)$.
\end{definition}

We are now in a position to start our analysis. First, we briefly describe the stages that we will go through toward proving the result for general static networks. As discussed earlier, $\bar{T}(\mathcal{G})$ is equal to the maximum expected meeting time of the original process. Since, due to the coupling between the random walks, computing the expected meeting time of the original process is difficult, we approximate the original process with the virtual process which are almost always the same, except when the walkers are each other's neighbors. However, the virtual process is itself a jointly correlated random walk. Therefore, to characterize its expected meeting time function, i.e., $M(x,y)$, we will show that $M(x,y)$ follows almost the same recursion formula as $\Phi(x,y)$. This allows us to construct a harmonic function (Lemma \ref{lemm:harmonic-main-lemma}) using $\Phi(x,y)$ and $M(x,y)$. We show that such a harmonic function is zero at some boundary point, and hence, must be identical to zero. This allows us to characterize $M(x,y)$ based on $\Phi(x,y)$ (Theorem \ref{thm:main}). Furthermore, since $\Phi(x,y)$ is a function of the expected hitting time of a single lazy random walk $\mathcal{Z}$, we can find an expression for $M(x,y)$ based on only the expected hitting time functions of the lazy random walk $\mathcal{Z}$. Moreover, since such an expression does not involve any coupling term, it is easy to compute it for different networks. Finally, we show in Theorem \ref{thm:second-main} that the expected meeting time function of the virtual process $M(x,y)$, and that in the original process lie within a constant factor of each other. This establishes our tight bound for $\bar{T}(\mathcal{G})$. We now start the stages of our proof, with the following lemma.    

\smallskip
\begin{lemma}\label{lemm:harmonic-main-lemma}
A function $f:V\times V\rightarrow \mathbb{R}$ defined by $f(x,y)=\frac{1}{2}\Phi(x,y)-M(x,y)$ is harmonic for the simple random walk on $\mathcal{G}\times \mathcal{G}$, i.e. 
\begin{align}\nonumber
f(x,y)=\sum_{(r,s)\in V\times V}\mathcal{Q}\big((x,y),(r,s)\big)f(r,s),
\end{align} 
where $\mathcal{Q}$ is the transition matrix of the simple random walk on $\mathcal{G}\times \mathcal{G}$, i.e.
\begin{align}\nonumber
\mathcal{Q}\big((x,y),(r,s)\big) = \begin{cases} \frac{1}{d(x)+d(y)}, & \mbox{if} \ (r,s) \in N_{\mathcal{G}\times \mathcal{G}}(x,y) \\ 
0, & \mbox{else}. 
\end{cases}
\end{align}
\end{lemma}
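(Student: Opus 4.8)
The plan is to show that $M$ and $\tfrac12\Phi$ obey the \emph{same} inhomogeneous balance equation with respect to $\mathcal{Q}$, so that subtracting them annihilates the inhomogeneous term and leaves $f=\tfrac12\Phi-M$ exactly $\mathcal{Q}$-harmonic. Precisely, I would establish, for every off-diagonal state $(x,y)$ (with the diagonal $\{(x,x)\}$ playing the role of the meeting/boundary set, where $M\equiv 0$), the two identities
\begin{align}\nonumber
M(x,y)-\!\!\sum_{(r,s)}\!\mathcal{Q}\big((x,y),(r,s)\big)M(r,s)=\frac{m}{d(x)+d(y)},
\end{align}
\begin{align}\nonumber
\Phi(x,y)-\!\!\sum_{(r,s)}\!\mathcal{Q}\big((x,y),(r,s)\big)\Phi(r,s)=\frac{2m}{d(x)+d(y)}.
\end{align}
Multiplying the second by $\tfrac12$ and subtracting from the first makes the right-hand sides cancel, which is exactly the harmonicity claimed for $f$. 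Note that the neighbours of $(x,y)$ in $\mathcal{G}\times\mathcal{G}$ are the states $(x_i,y)$ with $x_i\in N(x)$ and $(x,y_j)$ with $y_j\in N(y)$, so each $\mathcal{Q}$-average equals $\tfrac{1}{d(x)+d(y)}\big[\sum_{x_i\in N(x)}(\cdot)+\sum_{y_j\in N(y)}(\cdot)\big]$.

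For the first identity I would run a one-step analysis of the virtual process. At an interior state the selected edge moves a walker iff it is incident to it, giving $M(x,y)=1+\tfrac1m\sum_{x_i\in N(x)}M(x_i,y)+\tfrac1m\sum_{y_j\in N(y)}M(x,y_j)+\tfrac{m-d(x)-d(y)}{m}M(x,y)$, which rearranges to the stated value $\tfrac{m}{d(x)+d(y)}$. The delicate case is an adjacent state $x\in N(y)$: here the connecting edge is counted twice (probability $\tfrac2m$) and its selection sends a walker onto the diagonal, where $M=0$; with this convention the two extra diagonal neighbours $(y,y)$ and $(x,x)$ enter the $\mathcal{Q}$-sum with value $0$, and the same rearrangement reproduces the \emph{identical} inhomogeneity $\tfrac{m}{d(x)+d(y)}$. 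This exact matching is the whole reason the virtual process (and the factor $\tfrac12$) is introduced.

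For the second identity I would first prove the symmetry $\Phi(x,y)=\Phi(y,x)$, which follows from the cyclic-tour identity $H_{\mathcal{Z}}(x,y)+H_{\mathcal{Z}}(y,w)+H_{\mathcal{Z}}(w,x)=H_{\mathcal{Z}}(x,w)+H_{\mathcal{Z}}(w,y)+H_{\mathcal{Z}}(y,x)$ valid for any reversible chain. Using the single-walker first-step relation $\sum_{b\in N(a)}H_{\mathcal{Z}}(b,z)=d(a)H_{\mathcal{Z}}(a,z)-m$ (for $a\neq z$) on the source-varying sum $\sum_{x_i}\Phi(x_i,y)$, and using the symmetry to rewrite $\Phi(x,y_j)=\Phi(y_j,x)$ so that the otherwise target-varying sum $\sum_{y_j}\Phi(x,y_j)$ also becomes source-varying, I obtain $\sum_{x_i\in N(x)}\Phi(x_i,y)=d(x)\Phi(x,y)-m$ and $\sum_{y_j\in N(y)}\Phi(x,y_j)=d(y)\Phi(x,y)-m$. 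Dividing their sum by $d(x)+d(y)$ gives $\sum_{(r,s)}\mathcal{Q}\big((x,y),(r,s)\big)\Phi(r,s)=\Phi(x,y)-\tfrac{2m}{d(x)+d(y)}$, i.e. the second identity. These first-step relations remain valid verbatim at adjacent states, since the extra summand $H_{\mathcal{Z}}(y,y)=0$ (resp. $H_{\mathcal{Z}}(x,x)=0$) is already accounted for.

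I expect the main obstacle to be the adjacent case of the $M$-equation: one must verify that the twice-counted connecting edge together with the convention $M\equiv 0$ on the diagonal yields exactly the interior inhomogeneity $\tfrac{m}{d(x)+d(y)}$, since any mismatch there would leave a nonzero residual and destroy harmonicity. The only other nontrivial ingredient is the symmetry of $\Phi$; once the cyclic-tour identity is invoked, converting the offending target-varying sum into a source-varying one is immediate, and the remainder is bookkeeping over the product-graph neighbourhood.
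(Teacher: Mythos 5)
Your proposal is correct and follows essentially the same route as the paper: the paper likewise derives the inhomogeneous balance equations $\Phi(x,y)=\frac{2m}{d(x)+d(y)}+\frac{1}{d(x)+d(y)}\big(\sum_{j\in N(x)}\Phi(j,y)+\sum_{j\in N(y)}\Phi(j,x)\big)$ (using the symmetry of $\Phi$ from reversibility together with the one-step expansion of $H_{\mathcal{Z}}$) and the matching recursion $M(x,y)=\frac{m}{d(x)+d(y)}+\frac{1}{d(x)+d(y)}\big(\sum_{j\in N(x)}M(j,y)+\sum_{j\in N(y)}M(j,x)\big)$, valid uniformly in the adjacent and non-adjacent cases precisely because of the doubled connecting edge, and then subtracts with the factor $\frac{1}{2}$. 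Your reorganization of the $\Phi$-computation via the summed first-step relation $\sum_{b\in N(a)}H_{\mathcal{Z}}(b,z)=d(a)H_{\mathcal{Z}}(a,z)-m$ is algebraically equivalent to the paper's degree-weighted convex-combination expansion, and your explicit check of the adjacent case (diagonal terms entering the $\mathcal{Q}$-sum with value $0$) is exactly the point the paper's phrase ``regardless of whether $y\notin N(x)$ or $y\in N(x)$'' rests on.
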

\vspace{0.3cm}
\begin{proof} 
By the transitivity property of reversible Markov chains, we note that $\Phi(x,y)$ is symmetric, i.e. for any hidden vertex $w$, 
\begin{align}\nonumber
\Phi(x,y)&=H_{\mathcal{Z}}(x,y)+H_{\mathcal{Z}}(y,w)-H_{\mathcal{Z}}(w,y)\cr &=H_{\mathcal{Z}}(y,x)+H_{\mathcal{Z}}(x,w)-H_{\mathcal{Z}}(w,x)=\Phi(y,x).
\end{align}
Therefore, we can write:
\begin{align}\label{Lemma:Phi-one}
\Phi(x,y)&=\frac{d(x)}{d(x)+d(y)}\big[H_{\mathcal{Z}}(x,y)+H_{\mathcal{Z}}(y,w)-H_{\mathcal{Z}}(w,y)\big]\cr &+\frac{d(y)}{d(x)+d(y)}\big[H_{\mathcal{Z}}(y,x)+H_{\mathcal{Z}}(x,w)-H_{\mathcal{Z}}(w,x)\big]\cr 
&=\frac{d(x)}{d(x)+d(y)}\big(H_{\mathcal{Z}}(y,w)-H_{\mathcal{Z}}(w,y)\big)\cr &+\frac{d(y)}{d(x)+d(y)}\big(H_{\mathcal{Z}}(x,w)-H_{\mathcal{Z}}(w,x)\big)\cr &+\big[\frac{d(x)}{d(x)+d(y)}H_{\mathcal{Z}}(x,y)+\frac{d(y)}{d(x)+d(y)}H_{\mathcal{Z}}(y,x)\big].
\end{align}
Also, by expanding $H_{\mathcal{Z}}(x,y)$ by one step, we get:
\begin{align}\label{Lemma:Phi-Hxy}
H_{\mathcal{Z}}(x,y)=\frac{m}{d(x)}+\frac{1}{d(x)}\sum_{j\in N(x)}H_{\mathcal{Z}}(j,y),
\end{align}
and similarly by switching $x$ and $y$ we have:
\begin{align}\label{Lemma:Phi-Hyx}
H_{\mathcal{Z}}(y,x)=\frac{m}{d(y)}+\frac{1}{d(y)}\sum_{j\in N(y)}H_{\mathcal{Z}}(j,x).
\end{align}
Using \eqref{Lemma:Phi-Hxy} and \eqref{Lemma:Phi-Hyx} in \eqref{Lemma:Phi-one}, we get

\begin{align}\label{Lemma:Phi-two}
\Phi(x,y)&\!=\!\frac{2m}{d(x)+d(y)}\!+\!\frac{d(x)}{d(x)\!+\!d(y)}\big(H_{\mathcal{Z}}(y,w)\!-\!H_{\mathcal{Z}}(w,y)\big)\cr
&\!+\!\frac{d(y)}{d(x)\!+\!d(y)}\big(H_{\mathcal{Z}}(x,w)\!-\!H_{\mathcal{Z}}(w,x)\big)\cr
&\!+\!\frac{1}{d(x)\!+\!d(y)}\Big(\!\!\sum_{j\in N(x)}\!\!H_{\mathcal{Z}}(j,y)\!+\!\!\sum_{j\in N(y)}\!\!H_{\mathcal{Z}}(j,x)\Big).
\end{align}

Also, from the definition of $\Phi(\cdot, \cdot)$, we have $\Phi(j,y)=H_{\mathcal{Z}}(j,y)+H_{\mathcal{Z}}(y,w)-H_{\mathcal{Z}}(w,y), \ \forall j\in N(x)$. By taking summation over all $j\in N(x)$ and multiplying by the factor $\frac{1}{d(x)+d(y)}$, we arrive at
\begin{align}\label{Lemma:Phi-three} 
\frac{1}{d(x)+d(y)}&\sum_{j\in N(x)}\!\!\Phi(j,y)=\frac{1}{d(x)+d(y)}\sum_{j\in N(x)}\!\!H_{\mathcal{Z}}(j,y)\cr&+\frac{d(x)}{d(x)+d(y)}\big[H_{\mathcal{Z}}(y,w)-H_{\mathcal{Z}}(w,y)\big].
\end{align} 
By the same argument, and since $\Phi(x,j)=H_{\mathcal{Z}}(j,x)+H_{\mathcal{Z}}(x,w)-H_{\mathcal{Z}}(w,x)$, we have
\begin{align}\label{Lemma:Phi-four} 
\frac{1}{d(x)+d(y)}&\sum_{j\in N(x)}\!\!\Phi(j,y)=\frac{1}{d(x)+d(y)}\sum_{j\in N(y)}\!\!H_{\mathcal{Z}}(j,x)\cr&+\frac{d(y)}{d(x)+d(y)}\big[H_{\mathcal{Z}}(x,w)-H_{\mathcal{Z}}(w,x)\big],
\end{align} 
Substituting \eqref{Lemma:Phi-four} and \eqref{Lemma:Phi-three} in \eqref{Lemma:Phi-two} gives us
\begin{align}\label{Lemma:phi-harmonic}
\!\!\!\!\!\!\!\!\!\Phi(x,y)&\!=\!\frac{2m}{d(x)+d(y)}\cr 
&\!+\!\frac{1}{d(x)+d(y)}\Big(\!\!\!\sum_{j\in N(x)}\!\!\Phi(j,y)\!+\!\!\!\!\sum_{j\in N(y)}\!\!\Phi(j,x)\Big).
\end{align}
  
On the other hand, we note that regardless of whether $y \notin N(x)$ or $y \in N(x)$, the meeting time of the virtual process is equal to 
\begin{align}\nonumber
M(x,y)&\!=\!(1\!-\!\frac{d(x)+d(y)}{m})\big(1\!+\!M(x,y)\big)\cr&+\!\!\!\sum_{j\in N(x)}\!\!\frac{1}{m}\big(1\!+\!M(j,y)\big)\!+\!\!\!\sum_{j\in N(y)}\!\!\frac{1}{m}\big(1\!+\!M(j,x)\big)
\end{align}
from which by simplifying and rearranging the terms we get
\begin{align}\label{eq:M-harmoney}
\!\!\!\!\!\!\!M(x,y)\!&=\!\frac{m}{d(x)\!+\!d(y)}\cr 
&\!+\!\frac{1}{d(x)\!+\!d(y)}\Big(\!\!\!\sum_{j\in N(x)}\!\!\!M(j,y)\!+\!\!\!\!\sum_{j\in N(y)}\!\!\!M(j,x)\!\Big).
\end{align}
Let $S(x,y)=\frac{\Phi(x,y)}{2}$. From \eqref{Lemma:phi-harmonic} it is not hard to see that
\begin{align}\label{eq:S-harmoney} 
\!\!\!\!\!\!\!S(x,y)\!&=\!\frac{m}{d(x)\!+\!d(y)}\cr 
&\!+\!\frac{1}{d(x)\!+\!d(y)}\Big(\!\!\!\sum_{j\in N(x)}\!\!S(j,y)\!+\!\!\!\sum_{j\in N(y)}\!\!S(j,x)\!\Big).
\end{align}
We consider the simple random walk $\mathcal{Q}$ on the Cartesian product graph $\mathcal{G}\times \mathcal{G}$. The cover time and hitting time of such graphs have been extensively studied in \cite{productcover}, \cite{Covertime} and \cite{Explore-fast-graph}. We show that the function $f(x,y)=S(x,y)-M(x,y)$ is harmonic on $\mathcal{G}\times \mathcal{G}$ for the transition matrix $\mathcal{Q}$. In fact,
\begin{align}\nonumber
f(x,y)&=\frac{1}{d(x)+d(y)}\sum_{j\in N(x)}\big(S(j,y)-M(j,y)\big)\cr&+\frac{1}{d(x)+d(y)}\sum_{j\in N(y)}\big(S(j,x)-M(j,x)\big)\cr &=\frac{1}{d(x)+d(y)}\Big(\sum_{j\in N(x)}f(j,y)+\sum_{j\in N(y)}f(x,j)\Big)\cr&=\sum_{(r,s)\in V\times V}\mathcal{Q}\big((x,y),(r,s)\big)f(r,s).
\end{align}
This completes the proof.
\end{proof}

\smallskip 
Now we are ready to characterize the expected meeting time of the virtual process based on the expected hitting times of the single lazy random walk $\mathcal{Z}$ and effective resistances of an appropriate network. 

\smallskip
\begin{theorem}\label{thm:main}
The expected meeting time of the virtual process initiated from $x,y$ is equal to $\frac{1}{2}\Phi(x,y)$, i.e.,
\begin{align}\nonumber
M(x,y)=\frac{1}{2}\Big[H_{\mathcal{Z}}(x,y)\!+\!H_{\mathcal{Z}}(y,w)\!-\!H_{\mathcal{Z}}(w,y)\Big].
\end{align}
\end{theorem}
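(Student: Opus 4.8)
The plan is to derive the identity directly from Lemma~\ref{lemm:harmonic-main-lemma} by a maximum-principle argument: since $f(x,y)=\tfrac12\Phi(x,y)-M(x,y)$ is harmonic for the Cartesian-product walk $\mathcal{Q}$ on the finite connected graph $\mathcal{G}\times\mathcal{G}$, the identity $M(x,y)=\tfrac12\Phi(x,y)$ is precisely the assertion $f\equiv 0$, and a harmonic function on a finite connected state space is pinned down by its behaviour on the boundary of the meeting problem. So the whole proof reduces to controlling $f$ on that boundary and then letting harmonicity do the rest.

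First I would exploit symmetry. The proof of Lemma~\ref{lemm:harmonic-main-lemma} already records $\Phi(x,y)=\Phi(y,x)$, and $M(x,y)=M(y,x)$ because the meeting time of the two walkers is blind to which one we call $0$ and which $2$; hence $f$ is symmetric and may be regarded as a function of unordered pairs. Next I would identify the diagonal $\{(x,x):x\in V\}$ as the boundary: once the walkers share a vertex they have met, so $M(x,x)=0$, whereas the definition of the potential gives $\Phi(x,x)=H_{\mathcal{Z}}(x,w)-H_{\mathcal{Z}}(w,x)$. Evaluated at the hidden vertex this furnishes the single boundary zero $f(w,w)=\tfrac12\big(H_{\mathcal{Z}}(w,w)-H_{\mathcal{Z}}(w,w)\big)=0$. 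Because $\mathcal{Q}$ is irreducible on the finite set $V\times V$, the product walk hits the diagonal almost surely, and harmonicity yields the stochastic representation $f(x,y)=\mathsf{E}_{(x,y)}\!\big[f(u,u)\big]$, where $(u,u)$ is the first diagonal state visited; everything therefore comes down to the values of $f$ along the diagonal.

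The step I expect to be the main obstacle is upgrading Lemma~\ref{lemm:harmonic-main-lemma} from harmonicity off the diagonal to harmonicity at the meeting states, since that is what actually forces $f$ to be constant and lets the lone zero $f(w,w)=0$ propagate to $f\equiv 0$. The recursions \eqref{eq:S-harmoney} for $S=\tfrac12\Phi$ and \eqref{eq:M-harmoney} for $M$ were obtained by a one-step expansion of hitting/meeting times that is legitimate only when the two arguments differ; at a diagonal vertex each recursion picks up a spurious return-time contribution, arising because $H_{\mathcal{Z}}(x,x)=0$ and $M(x,x)=0$ are \emph{imposed} rather than produced by the expansion. The linchpin is to check that these two excess terms coincide for $S$ and for $M$, so that they cancel in $f=S-M$ and $f$ is genuinely harmonic at $(x,x)$ as well. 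I would try to verify this by writing out the first-step expansions of $M(x,x)$ and of $\tfrac12\Phi(x,x)$ and comparing their constant terms; if they match, the maximum principle on the connected graph $\mathcal{G}\times\mathcal{G}$ makes $f$ constant and $f(w,w)=0$ finishes the proof.

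If instead the diagonal excess terms do \emph{not} cancel, the fallback is to argue that the representation $f(x,y)=\tfrac12\,\mathsf{E}_{(x,y)}\!\big[H_{\mathcal{Z}}(u,w)-H_{\mathcal{Z}}(w,u)\big]$ still vanishes. Here I would use the transitivity (cycle) identities for the reversible chain $\mathcal{Z}$ that were invoked in the proof of Lemma~\ref{lemm:harmonic-main-lemma}: these force the antisymmetric part $H_{\mathcal{Z}}(\cdot,w)-H_{\mathcal{Z}}(w,\cdot)$ to be a potential difference, and I would try to show that averaging such a potential difference over the meeting trajectories of $\mathcal{Q}$ annihilates it, again giving $f\equiv 0$. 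This reconciliation between the meeting-time boundary condition $M(x,x)=0$ and the value $\Phi(x,x)$ of the potential on the diagonal is the delicate heart of the argument; once it is settled, the conclusion $M(x,y)=\tfrac12\Phi(x,y)$ is immediate.
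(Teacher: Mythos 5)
Your proposal reconstructs exactly the skeleton of the paper's own proof: the paper likewise sets $f=\frac{1}{2}\Phi-M$, observes that $g\equiv 0$ is harmonic, checks $f(w,w)=0=g(w,w)$, and invokes Lemma~\ref{Lemma:harmonic} to force $f\equiv g$. What you add is an explicit worry about harmonicity at the diagonal, and you are right that this is the crux --- but your proposal leaves both of its branches unresolved, and in fact neither branch can be closed. The recursions \eqref{eq:S-harmoney} and \eqref{eq:M-harmoney} hold only for $x\neq y$, with the conventions $M(u,u)=0$ and $S(u,u)=\frac{1}{2}\psi(u)$ inside the sums, where $\psi(u):=H_{\mathcal{Z}}(u,w)-H_{\mathcal{Z}}(w,u)$. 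Hence $f$ is harmonic for $\mathcal{Q}$ only off the diagonal, and its diagonal values are $f(u,u)=\frac{1}{2}\psi(u)$, which the hidden-vertex property makes \emph{nonnegative}, not zero. Your own stochastic representation then reads $f(x,y)=\frac{1}{2}\mathsf{E}_{(x,y)}\big[\psi(U)\big]$ with $U$ the first diagonal vertex hit: this is the harmonic-measure average of a nonnegative function, so it vanishes only if the product walk is certain to first meet where $\psi=0$. The fallback via cycle identities therefore cannot ``annihilate'' it --- the cycle identity does show $\psi$ is a potential difference, but a nonnegative potential difference has positive, not zero, average unless it is identically zero.

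That the excess terms genuinely fail to cancel can be checked by hand on the five-vertex path $1\!-\!2\!-\!3\!-\!4\!-\!5$ (which satisfies the standing assumption $d(x)+d(y)\le m=4$ on every edge). For the chain \eqref{eq:transition-z} one computes $H_{\mathcal{Z}}(1,2)=4$ but $H_{\mathcal{Z}}(2,1)=16$: a symmetric transition matrix does \emph{not} give symmetric hitting times, which also undercuts the assertion $H_{\mathcal{Z}}(y,w)=H_{\mathcal{Z}}(w,y)$ made later in the proof of Theorem~\ref{thm:end-main}. Here $w=1$ is a hidden vertex and $\psi$ takes the values $(0,12,16,12,0)$ at vertices $1,\ldots,5$, so $f$ is not zero on the diagonal; solving the ten-state linear system for the virtual process gives, for instance, $M(1,2)=47/11$, whereas $\frac{1}{2}\Phi(1,2)=\frac{1}{2}\big(H_{\mathcal{Z}}(1,2)+\psi(2)\big)=8$, and one can verify numerically that $f=\frac{1}{2}\Phi-M$ is exactly harmonic off the diagonal with boundary data $\frac{1}{2}\psi$. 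So the step you isolate is not merely unverified: it is precisely where the argument breaks, including in the paper's own version, which tacitly treats $f$ as harmonic at every state and applies Lemma~\ref{Lemma:harmonic} with the single boundary point $(w,w)$. On this route the identity $M\equiv\frac{1}{2}\Phi$ would need the additional input $H_{\mathcal{Z}}(x,w)=H_{\mathcal{Z}}(w,x)$ for all $x$, which is false in general. Your diagnosis is sharper than the paper's write-up, but as a proof of the stated theorem the proposal is incomplete, and the missing step cannot be filled as proposed.
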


\smallskip
\begin{proof}
Let $g:V\times V\rightarrow \mathbb{R}$ be the zero function, i.e., $g\equiv 0$. Clearly, $g$ is a harmonic function over $\mathcal{G}\times \mathcal{G}$. On the other hand, we have
\begin{align}\nonumber
f(w,w)&\!=\!S(w,w)\!-\!M(w,w)\!=\!S(w,w)\cr &\!=\!\frac{1}{2}\Big(H(w,w)\!+\!H(w,w)\!-\!H(w,w)\!\Big)\!=\!0\!=\!g(w,w).
\end{align} 
Since $f$ and $g$ are both harmonic functions for the transition matrix $\mathcal{Q}$ and also they have the same value at the node $(w,w)$, using Lemma \ref{Lemma:harmonic} they must be equal. Thus $f\equiv 0$, which shows that $M(x,y)=\frac{1}{2}\Phi(x,y), \forall x,y$.
\end{proof} 

\bigskip
\begin{theorem}\label{thm:second-main}
Consider a network $\mathcal{G}=(V,\mathcal{E})$. Then, 
\begin{align}\nonumber
\max_{x,y}M(x,y) \leq \bar{T}(\mathcal{G})\leq 2\max_{x,y}M(x,y).
\end{align}      
\end{theorem}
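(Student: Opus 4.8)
The plan is to read both sides of the inequality as maximal expected meeting times: $\bar{T}(\mathcal{G})=\max_{x\neq y}\bar{M}(x,y)$ for the \emph{original} process and $\max_{x,y}M(x,y)$ for the \emph{virtual} process, where $\bar{M}(x,y)$ denotes the expected meeting time of the original process from $(x,y)$. The key structural fact I would exploit is that the two processes have \emph{identical} transition laws at every non-neighbor state, and differ only at a neighbor state $x\in N(y)$: there the original process selects the connecting (meeting) edge with probability $\tfrac{1}{m}$, whereas the virtual process selects it with probability $\tfrac{2}{m}$, the extra $\tfrac{1}{m}$ of mass being removed from the ``neither-incident'' edges.

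For the lower bound I would establish the pointwise inequality $M(x,y)\le \bar{M}(x,y)$, which gives $\max_{x,y}M(x,y)\le \bar{T}(\mathcal{G})$ after maximizing. I construct a coupling of the two processes started from the same pair $(x,y)$: while the walkers are not neighbors the laws coincide, so both copies choose the same edge and stay synchronized; at a neighbor state I couple the ``incident-to-exactly-one-walker'' moves identically (keeping the copies in the same state) and couple the remaining mass so that whenever the original process meets the virtual one also meets, plus an extra $\tfrac{1}{m}$ of mass on which the virtual process meets while the original merely stays. Under this coupling the virtual copy meets no later on every sample path, so $\tau_{\mathrm{virt}}\le\tau_{\mathrm{orig}}$ and hence $M\le\bar{M}$ pointwise.

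For the upper bound (the harder direction) I would run the virtual meeting-time function $M$ along the \emph{original} trajectory $(X_t,Y_t)$ and use that $M$ obeys the recursion \eqref{eq:M-harmoney} at every state. A one-step computation with \eqref{eq:M-harmoney} gives $\mathbb{E}[M(X_{t+1},Y_{t+1})\mid\mathcal{F}_t]=M_t-1$ at non-neighbor states (the laws agree), while at a neighbor state the original law places the extra $\tfrac{1}{m}$ on ``staying'' rather than ``meeting,'' yielding $\mathbb{E}[M_{t+1}\mid\mathcal{F}_t]=M_t-1+\tfrac{1}{m}M_t$, where $M_t:=M(X_t,Y_t)$. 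Consequently $M_t+t-\tfrac{1}{m}\sum_{s<t}M_s\,\mathbf{1}\{\text{neighbors at }s\}$ is a martingale, and optional stopping at the meeting time $\tau$ (with $M_\tau=0$) gives $\mathbb{E}[\tau]=M_0+\tfrac{1}{m}\,\mathbb{E}\big[\sum_{s<\tau}M_s\,\mathbf{1}\{\text{neighbors at }s\}\big]$.

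To close the argument I bound the correction term. A meeting in the original process can occur only from a neighbor state and occurs with probability exactly $\tfrac{1}{m}$ at each such step, independently of the past by uniform edge selection; hence the number $R$ of neighbor-steps before meeting is geometric with parameter $\tfrac{1}{m}$, so $\mathbb{E}[R]=m$. Writing $A:=\max_{x,y}M(x,y)$ and bounding $M_s\le A$, the correction is at most $\tfrac{A}{m}\,\mathbb{E}[R]=A$, so $\bar{M}(x_0,y_0)=\mathbb{E}[\tau]\le M_0+A\le 2A$, and maximizing over the starting pair yields $\bar{T}(\mathcal{G})\le 2\max_{x,y}M(x,y)$. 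The step I expect to demand the most care is justifying the geometric law for $R$ — that the neighbor-steps constitute conditionally independent $\tfrac{1}{m}$-trials for meeting and recur until meeting almost surely — together with the validity of optional stopping, which relies on $\tau$ having finite expectation on a finite connected graph.
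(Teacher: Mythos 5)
Your proposal is correct, and while your lower bound is essentially the paper's argument made rigorous, your upper bound takes a genuinely different route. For the lower bound the paper only remarks that the two processes coincide until the walkers are neighbors and that the virtual process then meets ``with higher probability''; you supply the missing monotone coupling explicitly (matching the virtual process's extra $\tfrac{1}{m}$ of meeting mass against the original's stay mass), which is exactly what is needed to get $\tau_{\mathrm{virt}}\le\tau_{\mathrm{orig}}$ pathwise and hence $M\le\bar{M}$ pointwise. For the upper bound the paper uses a restart coupling borrowed from \cite{shang2013upper}: each virtual meeting is an original meeting with probability $\tfrac12$ (the ratio of $\tfrac{1}{m}$ to $\tfrac{2}{m}$), and on failure both processes are regarded as restarted from the neighboring pair, yielding $\bar{T}(\mathcal{G})\le\sum_{k\ge1}2^{-k}\,k\max_{x,y}M(x,y)=2\max_{x,y}M(x,y)$. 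You instead run $M$ as a potential along the original trajectory, use \eqref{eq:M-harmoney} to show the one-step drift is $-1$ except for a $+\tfrac{M_t}{m}$ correction at neighbor states (your identification of the two laws' difference as moving $\tfrac{1}{m}$ of mass from ``meet'' to ``stay'' is exactly right, since the virtual neither-incident mass totals $\tfrac{m-d(x)-d(y)}{m}$ versus the original's $\tfrac{m+1-d(x)-d(y)}{m}$), and close by optional stopping together with the fact that neighbor-steps before meeting form conditionally independent $\tfrac{1}{m}$-trials, so their number is geometric with mean $m$ and the correction is at most $\max_{x,y}M(x,y)$. Both proofs hinge on the same local fact, but your martingale route replaces the paper's somewhat informal restart bookkeeping (``we may assume they just switch their positions'') with a clean optional-stopping identity, at the cost of verifying $\mathbb{E}[\tau]<\infty$ and the recurrence of neighbor states, which you correctly flag and which do hold on a finite connected graph (from any non-meeting pair there is uniformly positive probability of reaching a neighbor state in at most $n$ steps). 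As a bonus, your argument gives the slightly sharper pointwise bound $\bar{M}(x,y)\le M(x,y)+\max_{u,v}M(u,v)$; like the paper's proof, it implicitly requires the virtual process to be well defined, i.e., $d(x)+d(y)\le m$ on all edges as in Remark~\ref{rem:star-graph}.
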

\begin{proof}
Initiating from arbitrary nodes $x$ and $y$, we note that both the virtual process and the original process follow the same joint distribution until they are each other's neighbor. However, when the walkers are each other's neighbor, with higher probability they are going to meet in the virtual process than in the original process.  Therefore, $M(x,y)\leq\bar{T}(\mathcal{G})$ for all $x, y$. Since by definition $\bar{T}(\mathcal{G})$ is independent of the initial states of walkers, $\max_{x,y}M(x,y) \leq \bar{T}(\mathcal{G})$. For the upper bound, we use the same argument as in \cite{shang2013upper}. Again, as mentioned earlier, the virtual process and the original process remain the same until the two walkers become each other's neighbors, i.e. for some $x,y$ with $x\in N(y)$. At this time, the probability that two walkers meet in the next transition in the original process is $\frac{1}{m}$, while this probability for the virtual process is $\frac{2}{m}$. Since the former is half of the latter, this immediately implies that the meeting time of the former is within a constant factor of the latter. In fact, at each time that the walkers in the virtual process meet, with probability $\frac{1}{2}$ the walkers in the original process meet as well. However, if the walkers in the virtual process meet, but they do not meet in the original process (which happens with probability $\frac{1}{2}$) then, in the original process we may assume that the positions of the walkers have not changed, and in the virtual process we may assume that they just switch their positions (from $x,y$ to $y,x$). Therefore, in this case we may assume that a new original process which is followed by its corresponding virtual process has just been initiated from nodes $x$ and $y$. Since each of these collisions of walkers during different time intervals happens independently and with probability $\frac{1}{2}$, we can write 
\begin{align}\nonumber
\bar{T}(\mathcal{G})\leq \sum_{k=1}^{\infty}(\frac{1}{2})^kk\max_{x,y}M(x,y)=2\max_{x,y}M(x,y), 
\end{align} 
where in the above summation the term $(\frac{1}{2})^k$ corresponds to the probability that the walkers in the virtual process meet $k$ times while they do not meet in the original process, and the term $k\max_{x,y}M(x,y)$ is an upper bound for the expected time that the walkers in the virtual process meet $k$ times.   
\end{proof}

Next, we will proceed by computing $H_{\mathcal{Z}}(x,y)$. Note that if at the time instant $t$ the walker is at the node $\mathcal{Z}(t)$, then the probability of staying in that state is $1-\frac{d(\mathcal{Z}(t))}{m}$. Because of Markov property of the random walk, the probability of moving out from each state follows the geometric distribution. Therefore, the expected time that the walker waits in state $\mathcal{Z}(t)$ is $\frac{m}{d(\mathcal{Z}(t))}$. However, when the walker is moving to the next state, he will see all of his neighbors with the same probability. Therefore, we can split the expected hitting time of the random walk $\mathcal{Z}(t)$ namely $H_{\mathcal{Z}}(x,y)$ to summation of two parts: 
\begin{itemize}
\item the expected time that the walker spends in each state $\mathcal{Z}(t)$, which is $\frac{m}{d(\mathcal{Z}(t))}$, 
\item the expected hitting time of a simple random walk $\mathcal{Z'}(t)$, namely $H(x,y)$.  
\end{itemize} 
Therefore, we have:
\begin{align}\label{eq:partition-lazy}
H_{\mathcal{Z}}(x,y)&\!=\!\!\!
\sum_{t=1}^{H(x,y)}\!\!\big(1\!+\!\frac{m}{d(\mathcal{Z'}(t))}\big)\!=\!H(x,y)\!+\!m\!\!\sum_{t=1}^{H(x,y)}\!\frac{1}{d(\mathcal{Z'}(t))}\cr 
&\!=\!H(x,y)\!+\!m\sum_{i}\frac{G_{\tau^x_y}(i)}{d(i)}\cr 
&\!=\!\!\sum_{i}\frac{m\!+\!d(i)}{2}\big[\mathcal{R}\big(x\leftrightarrow y\big)\!+\!\mathcal{R}\big(y \leftrightarrow i\big)\!-\!\mathcal{R}\big(x\leftrightarrow i\big)\big]\cr 
&\!\leq\! m\sum_{i}\big[\mathcal{R}\big(x\leftrightarrow y\big)\!+\!\mathcal{R}\big(y \leftrightarrow i\big)\!-\!\mathcal{R}\big(x\leftrightarrow i\big)\big]
\end{align}  
where the last equality is due to Lemma \ref{lemm:triangleineq} and relation \eqref{eq:hitting}.

\smallskip
We are now ready to state our upper and lower bounds for the expected convergence time of the unbiased quantized consensus over static networks.   

\bigskip
\begin{theorem}\label{thm:end-main}
Consider a connected network $\mathcal{G}$ with $n$ nodes and $m$ edges and diameter $D$. Then, for unbiased quantized consensus, we have
\begin{align}\nonumber
\frac{1}{2}H_{\mathcal{Z}}\leq \bar{T}(\mathcal{G})\leq H_{\mathcal{Z}}\leq 2nmD,
\end{align}
where, $H_{\mathcal{Z}}=\max_{x,y}H_{\mathcal{Z}}(x,y)$.
\end{theorem}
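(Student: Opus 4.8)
The plan is to assemble the displayed chain from the three mechanisms already in place: the identity $M(x,y)=\frac{1}{2}\Phi(x,y)$ from Theorem~\ref{thm:main}, the constant-factor comparison $\max_{x,y}M(x,y)\le \bar{T}(\mathcal{G})\le 2\max_{x,y}M(x,y)$ from Theorem~\ref{thm:second-main}, and the resistance decomposition of $H_{\mathcal{Z}}(x,y)$ in \eqref{eq:partition-lazy}. Combining the first two immediately gives $\frac{1}{2}\max_{x,y}\Phi(x,y)\le \bar{T}(\mathcal{G})\le \max_{x,y}\Phi(x,y)$, so the whole statement reduces to establishing the two sandwiching facts $H_{\mathcal{Z}}\le \max_{x,y}\Phi(x,y)\le H_{\mathcal{Z}}$ (which force $\max_{x,y}\Phi(x,y)=H_{\mathcal{Z}}$) together with the purely graph-theoretic estimate $H_{\mathcal{Z}}\le 2nmD$.

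For the lower sandwich $\max_{x,y}\Phi(x,y)\ge H_{\mathcal{Z}}$ I would argue directly. Since $w$ is a hidden vertex, $H_{\mathcal{Z}}(y,w)-H_{\mathcal{Z}}(w,y)\ge 0$ for every $y$, so from $\Phi(x,y)=H_{\mathcal{Z}}(x,y)+\big(H_{\mathcal{Z}}(y,w)-H_{\mathcal{Z}}(w,y)\big)$ we get $\Phi(x,y)\ge H_{\mathcal{Z}}(x,y)$ pointwise; evaluating at the pair $(x^*,y^*)$ attaining $H_{\mathcal{Z}}=\max_{x,y}H_{\mathcal{Z}}(x,y)$ yields $\max_{x,y}\Phi(x,y)\ge\Phi(x^*,y^*)\ge H_{\mathcal{Z}}$. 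The upper sandwich $\max_{x,y}\Phi(x,y)\le H_{\mathcal{Z}}$ is where the real work lies, and this is the step I expect to be the main obstacle. Here I would exploit the symmetry $\Phi(x,y)=\Phi(y,x)$ established inside the proof of Lemma~\ref{lemm:harmonic-main-lemma}, which via the reversibility (cycle) identity for hitting times lets me rewrite $\Phi(x,y)=H_{\mathcal{Z}}(x,w)+H_{\mathcal{Z}}(y,x)-H_{\mathcal{Z}}(w,x)$. The plan is to show that the maximum of $\Phi$ is governed by the hitting times into the hidden vertex, i.e. $\max_{x,y}\Phi(x,y)=\max_{z}H_{\mathcal{Z}}(z,w)$, using the defining inequality $H_{\mathcal{Z}}(w,\cdot)\le H_{\mathcal{Z}}(\cdot,w)$ together with the boundary observation that $\Phi(\cdot,w)=H_{\mathcal{Z}}(\cdot,w)$ on the slice $y=w$; and then to invoke the classical property of hidden vertices for reversible chains (as used in \cite{Fundamental} and \cite{shang2013upper}) that the extremal target is the hidden vertex, namely $\max_{z}H_{\mathcal{Z}}(z,w)=H_{\mathcal{Z}}$. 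The delicate point, on which I would spend the most care, is controlling the sign of the cross term $H_{\mathcal{Z}}(y,x)-H_{\mathcal{Z}}(w,x)$ at the maximizing pair: this does not follow from the hidden-vertex inequality alone and must be handled through the maximum principle for the superharmonic recursion \eqref{Lemma:phi-harmonic} satisfied by $\Phi$ off the diagonal, combined with the hidden-vertex cancellation.

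Finally, for $H_{\mathcal{Z}}\le 2nmD$ I would start from the last line of \eqref{eq:partition-lazy}, namely $H_{\mathcal{Z}}(x,y)\le m\sum_{i}\big[\mathcal{R}(x\leftrightarrow y)+\mathcal{R}(y\leftrightarrow i)-\mathcal{R}(x\leftrightarrow i)\big]$. Dropping the nonnegative subtracted term $\mathcal{R}(x\leftrightarrow i)$ and bounding each effective resistance by the length of a shortest connecting path (the series law gives $\mathcal{R}(a\leftrightarrow b)\le \mathrm{dist}(a,b)\le D$ when every edge has unit resistance) makes each summand at most $2D$; since the sum runs over the $n$ vertices, $H_{\mathcal{Z}}(x,y)\le 2nmD$ for every $x,y$, and taking the maximum gives $H_{\mathcal{Z}}\le 2nmD$. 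Concatenating the three parts produces $\frac{1}{2}H_{\mathcal{Z}}\le \bar{T}(\mathcal{G})\le H_{\mathcal{Z}}\le 2nmD$, as claimed.
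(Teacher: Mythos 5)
Your skeleton is the paper's: both arguments feed Theorem~\ref{thm:main} into Theorem~\ref{thm:second-main} and then invoke the resistance decomposition \eqref{eq:partition-lazy}. Your lower half is complete and correct (the hidden-vertex inequality gives $\Phi(x,y)\ge H_{\mathcal{Z}}(x,y)$ pointwise, hence $\bar{T}(\mathcal{G})\ge\frac{1}{2}\max_{x,y}\Phi(x,y)\ge\frac{1}{2}H_{\mathcal{Z}}$), and your derivation of $H_{\mathcal{Z}}\le 2nmD$ --- drop the nonnegative subtracted resistance, bound each effective resistance by the shortest-path length, hence by $D$ --- is essentially verbatim the paper's last step.

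The genuine gap is the entire upper bound $\bar{T}(\mathcal{G})\le H_{\mathcal{Z}}$, which in your reduction rests on $\max_{x,y}\Phi(x,y)\le H_{\mathcal{Z}}$; you do not prove this, you only outline a plan whose pivotal step you yourself flag as unresolved, and the plan as described would not work. First, \eqref{Lemma:phi-harmonic} exhibits $\Phi$ as \emph{super}harmonic off the diagonal for $\mathcal{Q}$ (it equals its neighborhood average \emph{plus} the positive forcing $\tfrac{2m}{d(x)+d(y)}$), so the optional-stopping/maximum-principle machinery runs the wrong way: it produces lower bounds on $\Phi$ at interior points, not the upper bound you need; and the boundary values on the diagonal are $\Phi(x,x)=H_{\mathcal{Z}}(x,w)-H_{\mathcal{Z}}(w,x)$, which the hidden-vertex property only bounds from \emph{below} by $0$, so the diagonal gives you no ceiling either. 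Second, the ``classical property'' you invoke to finish, $\max_{z}H_{\mathcal{Z}}(z,w)=H_{\mathcal{Z}}$, is not the hidden-vertex property ($H_{\mathcal{Z}}(w,x)\le H_{\mathcal{Z}}(x,w)$ says nothing about where the global maximum hitting time points) and is nowhere established, so even a successful reduction of $\max\Phi$ to $\max_z H_{\mathcal{Z}}(z,w)$ would leave the proof open.

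The paper closes this step by an entirely different, one-line mechanism that your proposal lacks: since the transition matrix \eqref{eq:transition-z} is symmetric, the paper argues that every path has the same probability traversed forward and backward and concludes $H_{\mathcal{Z}}(y,w)=H_{\mathcal{Z}}(w,y)$ for all $y$, which collapses $\Phi$ to $H_{\mathcal{Z}}$ pointwise, i.e., $M(x,y)=\frac{1}{2}H_{\mathcal{Z}}(x,y)$ as in \eqref{eq:upper-lower}; the whole sandwich then drops out of Theorem~\ref{thm:second-main}. I will add, in fairness to you, that your instinct to distrust a pointwise identity here is not baseless: path-reversal symmetry of a symmetric chain yields symmetric cycle and commute quantities, while symmetry of \emph{first-passage} times is a strictly stronger assertion (for the lazy walk \eqref{eq:transition-z} on a path graph one can check directly that $H_{\mathcal{Z}}$ between an end vertex and its neighbor is asymmetric), so the paper's closure deserves scrutiny at the level of maxima rather than pointwise values. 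But as a comparison of proofs, the verdict is that your proposal reproduces the easy half and the $2nmD$ estimate, and leaves the load-bearing inequality $\bar{T}(\mathcal{G})\le H_{\mathcal{Z}}$ unproven where the paper supplies an explicit (if brisk) argument.
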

\begin{proof}
Since the transition matrix given in \eqref{eq:transition-z} is symmetric, hence the random walk $\mathcal{Z}$ is a symmetric random walk. Therefore, every path has the same probability backward and forward and we have $H_{\mathcal{Z}}(y,w)=H_{\mathcal{Z}}(w,y), \forall y$. Applying this equality in the expression for $M(x,y)$ in Theorem \ref{thm:main} gives us
\begin{align}\label{eq:upper-lower}
M(x,y)=\frac{1}{2}H_{\mathcal{Z}}(x,y), \forall x,y.
\end{align}
Using \eqref{eq:upper-lower} and Theorem \ref{thm:second-main}, and the definition of $H_{\mathcal{Z}}$, we can see that $\frac{1}{2}H_{\mathcal{Z}} \leq \bar{T}(\mathcal{G})\leq H_{\mathcal{Z}}$. Furthermore, by relation \eqref{eq:partition-lazy}, we get 
\begin{align}\label{eq:upper-max-hitting-time}
H_{\mathcal{Z}}&\leq \max_{x,y}\Big\{ m\sum_{i}\big[\mathcal{R}\big(x\leftrightarrow y\big)+\mathcal{R}\big(y \leftrightarrow i\big)-\mathcal{R}\big(x\leftrightarrow i\big)\big]\Big\}\cr 
&\leq 2nmD,
\end{align}
where, in the second inequality we have used the fact that the effective resistance between any two nodes cannot exceed the length of the shortest path between those nodes \cite{Markov-Book} which is upper bounded by the diameter of $\mathcal{G}$. 
\end{proof}

\smallskip
\begin{remark}
As discussed earlier in Section \ref{sec:relevant}, the total number of nontrivial
averagings is at most of the order of $\mathcal{O}(n)$. This, in view
of the theorem \ref{thm:end-main} and Corollary 4 in \cite{shang2013upper}, shows that the maximum expected convergence time of unbiased quantized consensus over static network is at most $\mathcal{O}(nmD\log(n))$.
\end{remark}

\smallskip
\subsection{Simulation Results}\label{sec:Simulation}
In this section we present some simulation results to provide a comparison between the maximum expected meeting time $\bar{T}(\mathcal{G})$ and the proposed upper and lower bounds given above. We consider four different types of graphs with $n$ nodes: line graph, star graph, lollipop graph, and semi-regular graph. In lollipop graph each of its side clusters has $[\frac{n}{4}]$ nodes and they are connected with a single path. Also, for the semi-regular graph we consider a graph with $n$ nodes arranged around a circle, such that each node is linked to its next four nodes when we move clockwise around the circle. In Figure \ref{fig:all-graphs}, the ratio $\frac{\bar{T}(\mathcal{G})}{mnD}$ is depicted for each graph. As it can be seen, this ratio for the line graph converges asymptotically to a constant as $n$ goes to infinity. This is the same as what we would have expected from Corollary \ref{corr:equality-line-graph}. Moreover, using the transition probabilities given in \eqref{eq:transition-z} for the star graph, a simple calculation shows that $H_{\mathcal{Z}}=\frac{n(n-1)}{2}$, and hence from Theorem \ref{thm:end-main} we get $\frac{n(n-1)}{4n^2}\leq\frac{\bar{T}(\mathcal{G})}{mnD}\leq \frac{n(n-1)}{n^2}$, which is consistent with the ratio given in Figure \ref{fig:all-graphs}. Finally, for lollipop and semi-regular graphs, although the ratio $\frac{\bar{T}(\mathcal{G})}{mnD}$ is oscillating, it is clearly bounded from above by 1, which confirms the upper bound provided in Theorem \ref{thm:end-main}.                
 
\smallskip
\begin{figure}[htb]
\vspace{-3cm}
\begin{center}
\vspace{-0.75cm}
\hspace{1cm}
\includegraphics[totalheight=.4\textheight,
width=.6\textwidth,viewport=0 0 500 500]{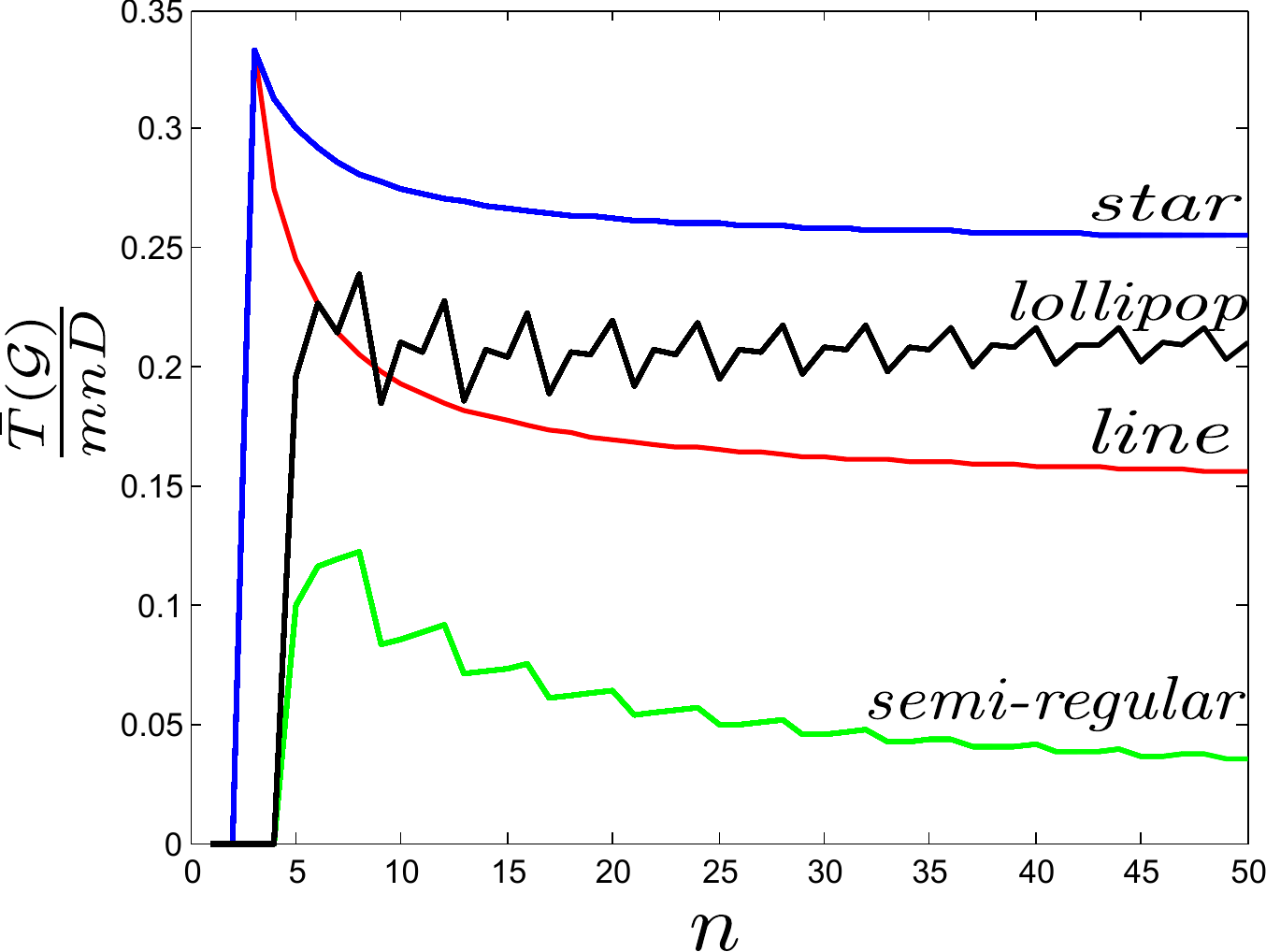} 
\hspace{0.4in}\end{center}
\caption{Comparison between the maximum meeting time and $mnD$ four different types of graphs.\label{fig:all-graphs}}
\end{figure}

\section{Convergence Time over Time-Varying Networks}\label{sec:Time-Varying}

In this section we extend the results of the previous section to time-varying networks. Let us consider a sequence of connected networks $\mathcal{G}(t)=(V,\mathcal{E}(t)), t=0,1,2,\ldots$, over a set of $n$ vertices, $V=\{1,2,\ldots,n\}$. We assume that such a sequence is selected a priori arbitrarily, and then fixed at the beginning of the process. In particular, the future sequences of the graphs do not depend on the current states of the quantized consensus protocol which we will define later. We denote the number of edges and the degree of vertex $x$ in $\mathcal{G}(t)$ by $m_t$ and $d_{t}(x)$, respectively. Here, we assume that all the networks are connected, as otherwise we may not reach consensus through the dynamics. For example if the networks are allowed to be disconnected, one can easily find a sequence of networks such that the quantized algorithm does not converge to any specific outcome. Moreover, as in the case of static networks, and in order to be able to define a virtual process (Remark \ref{rem:star-graph}), we invoke the following assumption.
\begin{assumption}\label{ass:time-varying-graphs}
$\mathcal{G}(t)=(V,\mathcal{E}(t))$ are connected graphs such that $d_{t}(u)+d_t(v)\leq m_t, \forall (u,v)\in \mathcal{E}(t), \forall t=0,1,2,\ldots$.   
\end{assumption}

\smallskip
Note that the above assumption is a very mild one and simply states that the sequence of graphs can be arbitrarily chosen from the set of all connected graphs except star and double-star graphs. Under this assumption we analyze the expected convergence time of the dynamics. 

\smallskip
\begin{remark}
One can relax the connectivity condition by assuming the existence of a constant $B\in \mathbb{N}$ such that the undirected graph with the vertex set $V$ and the edge set $\cup_{k=1}^{B}\mathcal{E}(t+k)$ is connected for all $t\ge 0$. We refer to such sequences of networks as $B$-connected networks. 
\end{remark}

\smallskip
Let us define a sequence of lazy random walks $\{\mathcal{Z}_t\}_{t\ge 0}$ corresponding to each network  with the following transition probability matrices:

\begin{align}\label{eq:transition-time-varying-z}
P_{t}(x,y) = \begin{cases} 1-\frac{d_{t}(x)}{m_t}, & \mbox{if} \ y=x \\ 
\frac{1}{m_t}, & \mbox{if} \  y \in N_t(x) \\
0, & \mbox{else}, 
\end{cases}
\end{align}

where $N_t(x)$ denotes the set of neighbors of node $x$ in $\mathcal{G}(t)$. Note that since all the probability matrices $P_{t}, t\ge 0$ are doubly stochastic, any arbitrary product is also doubly stochastic, and hence they all share a common stationary distribution $\pi=(\frac{1}{n},\frac{1}{n},\ldots,\frac{1}{n})$. Now based on the above setting and very naturally, one can extend the unbiased quantized protocol over the sequences of time-varying graphs. To be more precise, we consider a similar dynamics over these sequences of networks as follows. At each time step $t=0,1,2, \ldots$, we choose an edge uniformly at random from $\mathcal{G}(t)$ and update its incident nodes as in the static case. We note that since such dynamics preserve the average of the opinions over time-varying networks, the function given in \eqref{eq:Lyapunov} is a valid Lyapunov function for the dynamics over time-varying networks. In particular, the value of this function will decrease by at least 2 after every nontrivial update. Thus, as in the static network analysis, the total number of nontrivial averagings is at most of the order of $\mathcal{O}(n)$. Consequently, finding an upper bound on the maximum expected convergence time of the dynamics over time-varying networks reduces to that of finding the maximum expected time it takes for a nontrivial update to take place, which we denote it by $\bar{T}$.

Next, we consider two random walkers who move jointly over these sequences of time-varying networks based on whether the selected edge is incident to them or not. In other words, if the chosen edge at some time instant is incident to one of the walkers, we will then move it, otherwise the walkers will not change their positions. Through some manipulation as in the case of static network analysis, one can argue that $\bar{T}$ is equal to the maximum expected time it takes for these two walkers to meet. Therefore, very similar to the static case, we have the following definitions.
\smallskip

\begin{definition}
Denoting the locations of the walkers at time instant $t$ by $x,y \in V(\mathcal{G}(t))$, if the selected edge at this time is incident to one of the walkers, e. g., $\{x,x_i\}$ for some $x_i\in N_t(x)$, we will move it from node $x$ to node $x_i$, otherwise the walkers will not change their positions. We refer to such a random walks over the sequence of $\mathcal{G}(t), t=0,1,2,\ldots$ as the \textit{original} process.
\end{definition}  
\smallskip
\begin{definition}
In a time-varying network, a virtual process is the same as the original process until when the walkers are each other's neighbors at some time instant $t$, i.e. $x\in N_t(y)$, for some $x,y\in V(\mathcal{G}(t))$. At this time the connecting edge $\{x,y\}\in \mathcal{E}(t)$ in the virtual process is counted twice in the edge probability distribution, i.e., for $e\in \mathcal{E}(t)$,
\begin{align}\nonumber
\!\!\mathbb{P}(e) \!=\! \begin{cases} \ \frac{2}{m_t},\!\!\!\! & \resizebox{0.175\hsize}{!}{$\mbox{if} \ \mbox{$e=\{x,y\}$}$} \\ 
\frac{1}{m_t},\!\!\!\! & \resizebox{0.5\hsize}{!}{$\mbox{if} \ \mbox{$e$ is incident to either $x$ or $y$ on $\mathcal{G}(t)$}$}   \\
\!\!\frac{m_t-d_t(x)-d_t(y)}{m_t(m_t+1-d_t(x)-d_t(y))},\!\!\!\! & \resizebox{0.5\hsize}{!}{$\mbox{if} \ \mbox{$e$ is incident to neither $x$ nor $y$ on $\mathcal{G}(t)$}$}, 
\end{cases}
\end{align}  
and the walkers move depending on whether the selected edge is incident them or not.
\end{definition}

\smallskip
\begin{remark}
Due to the time-varying nature of the networks, there is no dependency between the location of the walkers and the next graph in the sequence. In other words, the next graph in the sequence cannot be determined based on the current locations of the walkers, as otherwise, one can simply construct a sequence of connected time-varying networks which depends on the location of the walkers, such that the walkers never meet each other. We will see later that due to some laziness that exists in the joint transition probability of these random walks, the expected time until they meet is finite.         
\end{remark}

\smallskip
Now let us focus on the virtual process; we denote its transition probabilities described on the network $\mathcal{G}(t) \times \mathcal{G}(t)$ by a matrix $K(t)$. In fact, $K(t)$ is an $n^2\times n^2$ dimensional matrix whose rows and columns are labeled as all the possible pairs of vertices, and the entry $((x_1,y_1),(x_2,y_2))$ of $K(t)$ is the conditional probability that the walkers in the virtual process are at the nodes $(x_2,y_2)$ given that they were at $(x_1,y_1)$ at the previous time step. Based on this construction, the meeting time of the virtual process that started at $(a,b)$ is equal to the expected time that a time inhomogeneous Markov chain with transition matrices $K(t), t=0,1,2,\ldots$ started from $(a,b)$ hits one of the states $S=\{(1,1), (2,2),\ldots, (n,n)\}$ for the first time. In fact, since we are interested in an upper bound on the expected hitting time of such a random walk on $\{\mathcal{G}(t)\times\mathcal{G}(t)\}_{t\ge 0}$ (and hence an upper bound on the expected meeting time of the virtual process), we can manipulate some of the entries of the matrices $\{K(t)\}_{t\ge 0}$ as long as we make sure that the expected hitting time to $S$ does not decrease throughout the process. Therefore, we upper bound the expected hitting time of such a modified process whose transition matrix at time $t$ we denote by $\bar{K}(t)$. Following this idea and in order to have a symmetric modified chain, for all $t\ge 0$ we define $K(t)$ to be the same as $\bar{K}(t)$ in all but the following cases:  
\begin{itemize}
\item We note that in the virtual process when two walkers are each other's neighbors ($x\in N_{t}(y)$), the probability that the connecting edge between them is chosen is $\frac{2}{m_t}$. Therefore, in the modified chain matrix $\bar{K}(t)$ by assigning probabilities $\frac{1}{m_t}$ for moving from $(x,y)$ to $(x,x)$ and also moving from $(x,y)$ to $(y,y)$ (similarly moving from $(y,x)$ to $(x,x)$ and also moving from $(y,x)$ to $(y,y)$), the expected hitting time to $S$ will not change.
\item Since all the vertices $(x,x)\in S$ are absorbing states in the virtual process, we have $K(t)((x,x),(x,x))=1, \forall x\in V$. In this case, by modifying the row $(x,x)$ of the matrix $K(t)$ to $\bar{K}(t)((x,x),(x,x))=1-\frac{2d_t(x)}{m_t}$ and $\bar{K}(t)((x,x),(x',y'))=\frac{1}{m_t}, \forall (x',y')\in N_{\mathcal{G}(t)\times \mathcal{G}(t)}(x,x)$ we will get a chain whose expected hitting time to $S$ is again the same as the expected hitting time of the chain $\{K(t)\}_{t\ge 0}$.
\end{itemize}

\smallskip
By these modifications, the modified chains $\{\bar{K}(t)\}_{t\ge 0}$ and $\{K(t)\}_{t\ge 0}$ will have the same expected hitting times to $S$. Moreover, by the definition of the transition probabilities of the virtual process matrices ($K(t)$) and the above modifications, we observe that the transition matrix $\bar{K}(t)$ must satisfy $\bar{K}(t)=I-\frac{1}{m_t}\mathcal{L}_{\mathcal{G}(t)\times \mathcal{G}(t)}$ for all $t\ge 0$, where $\mathcal{L}_{\mathcal{G}(t)\times \mathcal{G}(t)}$ is the Laplacian of the Cartesian product graph $\mathcal{G}(t)\times \mathcal{G}(t)$ and $I$ denotes the identity matrix of proper size. On the other side by a close look at the matrix $P_t$ it is not hard to see that $P_t=I-\frac{1}{m_t}\mathcal{L}_{\mathcal{G}(t)}$. 

\smallskip
We are now in a position to study the expected convergence time of the dynamics. But, before we proceed, we first provide a summary of the steps involved in the proof. Based on the above discussion, in order to determine the expected meeting time function of the virtual process over $\{\mathcal{G}(t)\}_{t=0}^{\infty}$, we can equivalently concentrate on finding the expected hitting time to the absorbing states $S$ of an inhomogeneous Markov chain with transition matrices $\{K(t)\}_{t=0}^{\infty}$ which are defined over $\{\mathcal{G}(t)\times \mathcal{G}(t)\}_{t=0}^{\infty}$. As discussed above, the hitting time to the absorbing states of this chain is equal to that in the modified chain $\{\bar{K}(t)\}_{t=0}^{\infty}$. Since the matrices $\{\bar{K}(t)\}_{t=0}^{\infty}$ are symmetric, and hence, doubly stochastic, we can find a precise expression for the second largest eigenvalue and the smallest eigenvalue of $\bar{K}(t)$ based on those of the matrix $P_t$. This allows us to find tight bounds on the spectral gap of the matrices $\bar{K}(t), t=0,1,\ldots$.

\smallskip 
Since all matrices in the inhomogeneous Markov chain $\{\bar{K}(t)\}_{t=0}^{\infty}$ are doubly stochastic, starting from any initial distribution  $p(0)$, and after sufficiently long period of time $t'$ (which will be determined by the spectral gap of such matrices), the probability of being in different states $p(t')$ will be very close to the stationary distribution of the chain, i.e., $\pi=(\frac{1}{n^2}, \frac{1}{n^2},\ldots, \frac{1}{n^2})'$. In particular, the probability of being absorbed by $S$ after time $t'$ will be large enough and bounded away from 0. This allows us to find an upper bound on how long it takes until the chain $\{\bar{K}(t)\}_{t=0}^{\infty}$ starting from an arbitrary initial distribution $p(0)$ to hit at least one of the absorbing states. Equivalently, this provides an upper bound on the expected hitting time of the chain $\{K(t)\}_{t=0}^{\infty}$ to the set $S$, and hence, an upper bound on the expected meeting time of the virtual process. Finally, by the same line of argument as in the case of static networks, we can show that the maximum expected meeting time of the original process is within a constant factor of that in the virtual process. Keeping these main steps in mind and toward a complete proof, we first consider the following lemma.

\smallskip    
\begin{lemma}[Laplacian Spectrum of a Graph Product \cite{anderson1985eigenvalues}]\label{lemma:product-graph-eigenvalue}
If $\mathcal{L}_G$ has eigenvalues $\lambda_1,\ldots, \lambda_n$ and
$\mathcal{L}_H$ has eigenvalues $\mu_1,\ldots, \mu_n$, then $\mathcal{L}_{G\times H}$ has 
eigenvalues $\lambda_i+\mu_j, i,j=1,\ldots,n$.
\end{lemma}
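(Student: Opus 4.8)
The plan is to realize $\mathcal{L}_{G\times H}$ as a Kronecker sum of $\mathcal{L}_G$ and $\mathcal{L}_H$ and then read the spectrum directly off tensor products of eigenvectors. First I would recall the combinatorics of the Cartesian product: $G\times H$ has vertex set $V(G)\times V(H)$, a vertex $(u,v)$ has degree $d_G(u)+d_H(v)$, and $(u,v)$ is adjacent to $(u',v')$ exactly when either $u=u'$ and $v\sim v'$ in $H$, or $v=v'$ and $u\sim u'$ in $G$. Ordering the $n^2$ vertices lexicographically so the block structure in $u$ is respected, these two facts translate into the matrix identities $\mathcal{A}_{G\times H}=\mathcal{A}_G\otimes I+I\otimes \mathcal{A}_H$ and $\mathcal{D}_{G\times H}=\mathcal{D}_G\otimes I+I\otimes \mathcal{D}_H$, where $\otimes$ is the Kronecker product and $I$ is the $n\times n$ identity. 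Subtracting yields the key identity
\begin{align}\nonumber
\mathcal{L}_{G\times H}=\mathcal{L}_G\otimes I+I\otimes \mathcal{L}_H.
\end{align}

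Next I would exploit that $\mathcal{L}_G$ and $\mathcal{L}_H$ are real symmetric, so there exist orthonormal eigenbases with $\mathcal{L}_G u_i=\lambda_i u_i$ and $\mathcal{L}_H v_j=\mu_j v_j$ for $i,j=1,\dots,n$. For each pair $(i,j)$ I would test the candidate eigenvector $u_i\otimes v_j$ against the identity above, using the mixed-product rule $(A\otimes B)(x\otimes y)=(Ax)\otimes(By)$:
\begin{align}\nonumber
\mathcal{L}_{G\times H}(u_i\otimes v_j)=(\mathcal{L}_G u_i)\otimes v_j+u_i\otimes(\mathcal{L}_H v_j)=(\lambda_i+\mu_j)(u_i\otimes v_j),
\end{align}
so $u_i\otimes v_j$ is an eigenvector of $\mathcal{L}_{G\times H}$ with eigenvalue $\lambda_i+\mu_j$.

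Finally I would argue that these exhaust the spectrum. Since $\{u_i\}$ and $\{v_j\}$ are orthonormal bases of $\R^n$, the $n^2$ vectors $\{u_i\otimes v_j\}_{i,j}$ form an orthonormal basis of $\R^{n^2}$, hence a complete set of eigenvectors of the $n^2\times n^2$ matrix $\mathcal{L}_{G\times H}$; the associated eigenvalues $\{\lambda_i+\mu_j\}_{i,j=1}^{n}$, counted with multiplicity, are therefore precisely all of its eigenvalues. I do not expect a genuine obstacle here: the only step requiring care is the bookkeeping that matches the lexicographic/block indexing of the product vertices with the tensor-product structure of the eigenvectors, so that the adjacency and degree decompositions hold exactly rather than merely up to a relabeling.
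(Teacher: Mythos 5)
Your proof is correct: the paper does not prove this lemma at all, simply citing Anderson and Morley \cite{anderson1985eigenvalues}, and your argument --- the decomposition $\mathcal{L}_{G\times H}=\mathcal{L}_G\otimes I+I\otimes \mathcal{L}_H$ together with the tensor-product eigenvectors $u_i\otimes v_j$ forming a complete orthonormal basis of $\R^{n^2}$ --- is exactly the standard proof underlying that citation. Nothing is missing; the completeness step via orthonormality of the $n^2$ Kronecker products is precisely the right way to see that the $\lambda_i+\mu_j$ exhaust the spectrum with multiplicity.
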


\smallskip
From the Perron-Frobenius theorem and Lemma \ref{lemma:product-graph-eigenvalue} one can conclude that the second smallest eigenvalues of $\mathcal{L}_{\mathcal{G}(t)}$ and $\mathcal{L}_{\mathcal{G}(t)\times \mathcal{G}(t)}$ must be the same for $t\ge 0$. Now let us denote the second largest eigenvalue and the second smallest eigenvalue of a $k\times k$ matrix $A$ by $\alpha_2(A)$ and $\alpha_{k-1}(A)$, respectively. For every $t\ge 0$, we can write
\begin{align}\label{eq:eigenvalue-equality}
\alpha_2(\bar{K}(t))&\!=\!\alpha_2(I\!-\!\frac{1}{m_t}\mathcal{L}_{\mathcal{G}(t)\times \mathcal{G}(t)})\!=\!1\!-\!\frac{1}{m_t}\alpha_{n^2-1}(\mathcal{L}_{\mathcal{G}(t)\times \mathcal{G}(t)})\cr
&\!=\!1\!-\!\frac{1}{m_t}\alpha_{n^2-1}(\mathcal{L}_{\mathcal{G}(t)})\!=\!\alpha_2(I\!-\!\frac{1}{m_t}\mathcal{L}_{ \mathcal{G}(t)})\!=\!\alpha_2(P_t).
\end{align}

Similarly, from Perron-Frobenius theorem and Lemma \ref{lemma:product-graph-eigenvalue} one can observe that the largest eigenvalue of $\mathcal{L}_{\mathcal{G}(t)\times \mathcal{G}(t)}$ is two times of that in $\mathcal{L}_{\mathcal{G}(t)}$ for $t\ge 0$. Thus, if we denote, respectively, the largest eigenvalue and the smallest eigenvalue of a $k\times k$ matrix $A$ by $\alpha_1(A)$ and $\alpha_{k}(A)$, we can write
\begin{align}\label{eq:smallest-eigenvalue-bound}
\alpha_{n^2}(\bar{K}(t))&\!=\!\alpha_{n^2}(I\!-\!\frac{1}{m_t}\mathcal{L}_{\mathcal{G}(t)\times \mathcal{G}(t)})\!=\!1\!-\!\frac{1}{m_t}\alpha_{1}(\mathcal{L}_{\mathcal{G}(t)\times \mathcal{G}(t)})\cr
&\!=\!1\!-\!\frac{2}{m_t}\alpha_{1}(\mathcal{L}_{\mathcal{G}(t)}).
\end{align}
In what follows, our goal is to find an upper bound for the second largest eigenvalue and a lower bound for the smallest eigenvalue of the matrix $\bar{K}(t)$. 

First, we note that by relation \eqref{eq:eigenvalue-equality}, the second largest eigenvalue of the matrix $\bar{K}(t)$ is equal to the second largest eigenvalue of the matrix $P_t$. In order to bound the second largest eigenvalue of the matrix $P_t$ we look for a relationship between its eigenvalues and the hitting times of a random walk with transition probability matrix $P_t$. In fact, the random target Lemma provides us with such a relationship.

\smallskip
\begin{lemma}\label{lemma:random-target}
[Random Target Lemma] For an irreducible Markov chain with state space $\Omega=\{1,2,\ldots,n\}$, transition matrix $P$, and stationary distribution $\pi$, we have
\begin{align}\nonumber
\sum_{j=1}^{n}\pi_j H(i,j)=\sum_{k=2}^{n}\frac{1}{1-\alpha_k(P)}, \ \ \forall i\in \Omega
\end{align}
where $1=\alpha_1(P)> \alpha_2(P)\ge \alpha_3(P)\ge \ldots, \alpha_n(P)$ denote the eigenvalues of $P$ in a non-increasing order and $H(\cdot,\cdot)$ is the expected hitting time function of the chain.
\end{lemma}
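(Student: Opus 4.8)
The plan is to prove two separate facts: that the left-hand side $a(i):=\sum_{j}\pi_j H(i,j)$ does not in fact depend on the starting state $i$, and that its common value is $\sum_{k=2}^{n}\frac{1}{1-\alpha_k(P)}$. Both follow cleanly from the \emph{fundamental matrix} of the chain, so I would organize the argument around it, while first noting the conceptual reason for the independence, which fits the harmonic-function viewpoint already used throughout this paper.

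First I would establish that $a(\cdot)$ is constant by a harmonic-function argument. By first-step analysis, $H(i,j)=1+\sum_k P(i,k)H(k,j)$ for $i\neq j$, while for the diagonal the mean return-time identity (expected return time to $j$ equals $1/\pi_j$) gives $\sum_k P(j,k)H(k,j)=\frac{1}{\pi_j}-1$. Multiplying by $\pi_j$, summing over $j$, and using $H(i,i)=0$, a short bookkeeping computation collapses to $\sum_k P(i,k)\,a(k)=a(i)$; that is, $a$ is a harmonic function for $P$. Since $P$ is irreducible, its only harmonic functions are the constants, so $a(i)$ is independent of $i$.

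Second, to pin down the constant I would introduce the fundamental matrix $Z=\sum_{t\ge 0}\big(P^{t}-\Pi\big)$, with $\Pi=\mathbf{1}\pi^{\top}$. The standard hitting-time identity $H(i,j)=\frac{Z_{jj}-Z_{ij}}{\pi_j}$ then gives
\begin{align}\nonumber
\sum_{j}\pi_j H(i,j)=\sum_{j}\big(Z_{jj}-Z_{ij}\big)=\mathrm{tr}(Z)-\big(Z\mathbf{1}\big)_i=\mathrm{tr}(Z),
\end{align}
where $Z\mathbf{1}=0$ because $(P^{t}-\Pi)\mathbf{1}=0$ for every $t$. This simultaneously re-derives the independence of $i$ and reduces the whole problem to computing a single trace.

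Finally I would evaluate $\mathrm{tr}(Z)$ spectrally: the eigenvalue-$1$ direction of $P$ is exactly the one annihilated by subtracting $\Pi$, while each remaining eigenvalue $\alpha_k$ with $k\ge 2$ contributes $\sum_{t\ge 0}\alpha_k^{t}=\frac{1}{1-\alpha_k}$ to the spectrum of $Z$, so $\mathrm{tr}(Z)=\sum_{k=2}^{n}\frac{1}{1-\alpha_k(P)}$. The one point that needs care — and the main obstacle — is convergence of the series defining $Z$ together with the legitimacy of reading off its eigenvalues term by term. This is immediate once $|\alpha_k|<1$ for all $k\ge 2$, which holds in the present application because the chains of interest are the lazy, reversible walks $\mathcal{Z}$ (and the modified chains $\bar K(t)$), whose eigenvalues lie in $[0,1]$ with a simple top eigenvalue, and whose reversibility guarantees diagonalizability so that the spectral bookkeeping is rigorous. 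For a general periodic irreducible chain one would instead read the sum in the Cesàro/Abel sense, but that subtlety never arises here.
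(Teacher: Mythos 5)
Your argument is correct, and it is worth noting up front that the paper does not actually prove Lemma~\ref{lemma:random-target}: its ``proof'' is a one-sentence sketch (set up a recursion matrix equality for the hitting times, interpret its solution through the eigenvalues of $P$) followed by pointers to \cite{lovasz1993random} and \cite{catral2010kemeny}. Your proposal is a complete, self-contained realization of exactly that sketch: the fundamental matrix $Z=\sum_{t\ge 0}(P^t-\Pi)$ packages the solution of the first-step recursion, the identity $H(i,j)=(Z_{jj}-Z_{ij})/\pi_j$ is the alluded-to ``recursion matrix equality,'' and the trace computation supplies the eigenvalue interpretation. I checked the bookkeeping in your harmonicity step ($\sum_k P(i,k)a(k)=\bigl[a(i)-(1-\pi_i)\bigr]+(1-\pi_i)=a(i)$, using $H(i,i)=0$ and the return-time identity) and the row-sum cancellation $Z\mathbf{1}=0$; both are right, and you correctly observe that the harmonic-function step becomes redundant once $Z\mathbf{1}=0$ is in hand. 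Compared with the cited sources, which work with the group inverse of $I-P$ (\cite{catral2010kemeny}) or the spectral decomposition of reversible chains (\cite{lovasz1993random}), your series definition of $Z$ is the one place where an extra hypothesis sneaks in: convergence needs $|\alpha_k|<1$ for $k\ge 2$, and you handle this correctly by restricting attention to the chains actually used in the paper. If you want the lemma at its full stated generality (any irreducible chain, including periodic ones), replace the series by $(I-P+\Pi)^{-1}$, which always exists, has spectrum $\{1\}\cup\{1-\alpha_k:\ k\ge 2\}$, satisfies the same hitting-time identity, and maps $\mathbf{1}$ to $\mathbf{1}$; the same trace argument then yields the result with no summability caveat. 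Likewise, if you prefer not to import $H(i,j)=(Z_{jj}-Z_{ij})/\pi_j$ as a black box, it can be verified with the paper's own uniqueness tool (Lemma~\ref{Lemma:harmonic}): for fixed $j$, both sides satisfy the same one-step recursion off $\{j\}$ and vanish at $i=j$.

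One small correction to your final paragraph: the parenthetical claim that the eigenvalues of the modified chains $\bar{K}(t)$ lie in $[0,1]$ is false in general --- the paper's own bound \eqref{eq:alpha_n} only guarantees $\alpha_{n^2}(\bar{K}(t))\ge -1+\frac{1}{m_t}$, and this quantity is genuinely negative for, e.g., complete graphs, which satisfy Assumption~\ref{ass:time-varying-graphs}. The slip is harmless: in the paper the lemma is applied only to the chains $P_t=I-\frac{1}{m_t}\mathcal{L}_{\mathcal{G}(t)}$, whose eigenvalues are at least $\frac{1}{2m_t}>0$ by Lemma~\ref{lemm:largest-laplacian-eigenvalue}, and even for $\bar{K}(t)$ one still has $|\alpha_k|<1$ for all $k\ge 2$, which is all your convergence argument requires. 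But as written the claim should be weakened to exactly that.
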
 
\begin{proof}
The idea of the proof is to find a recursion matrix equality for the expected hitting time of a chain with transition probability matrix $P$ and interpreting the solution of this equation based on the eigenvalues of $P$. A complete proof can be found in \cite{lovasz1993random} and also \cite{catral2010kemeny}.  
\end{proof}

\smallskip
Next, in order to obtain a lower bound for the smallest eigenvalue of the matrix $\bar{K}(t)$ and in view of relation \eqref{eq:smallest-eigenvalue-bound}, we find an upper bound for the largest eigenvalue of the Laplacian of $\mathcal{G}(t)$, i.e., $\alpha_{1}(\mathcal{L}_{\mathcal{G}(t)})$. In fact, the following lemma provides us with a desired upper bound.

\smallskip
\begin{lemma}\label{lemm:largest-laplacian-eigenvalue}
The largest eigenvalue of the Laplacian of any graph with $m$ edges, satisfying Assumption \ref{ass:time-varying-graphs}, is bounded from above by $m-\frac{1}{2}$. 
\end{lemma}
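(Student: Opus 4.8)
The plan is to bound $\lambda_{\max}(\mathcal{L}_{\mathcal{G}})$ by the classical edge--degree estimate and then squeeze out the extra $\tfrac12$ using the assumption together with the equality characterization of that estimate. First I would pass to the factored form $\mathcal{L}_{\mathcal{G}}=BB^{T}$, where $B$ is the $n\times m$ signed incidence matrix of $\mathcal{G}$ (one column per edge, carrying a $+1$ and a $-1$ at its two endpoints). Since $BB^{T}$ and $B^{T}B$ have the same nonzero eigenvalues, it suffices to bound $\lambda_{\max}(B^{T}B)$, an $m\times m$ matrix indexed by edges whose diagonal entries all equal $2$ and whose off-diagonal entry is $\pm1$ exactly when two edges share an endpoint. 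Applying Gershgorin's theorem to the row of an edge $\{u,v\}$, whose off-diagonal absolute values sum to $(d(u)-1)+(d(v)-1)$, yields the Anderson--Morley bound $\lambda_{\max}(\mathcal{L}_{\mathcal{G}})\le \max_{\{u,v\}\in\mathcal{E}}(d(u)+d(v))$. Under Assumption~\ref{ass:time-varying-graphs} the right-hand side is at most $m$, which already gives the weak bound $\lambda_{\max}\le m$.

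Next I would reduce to the boundary case. Because $d(u)+d(v)$ is an integer, if the maximum over edges is at most $m-1$ then $\lambda_{\max}\le m-1<m-\tfrac12$ and there is nothing more to prove. Hence the only situation to analyze is when some edge $\{u,v\}$ satisfies $d(u)+d(v)=m$. A counting argument then pins down the structure: the number of edges incident to $u$ or $v$ is $d(u)+d(v)-1=m-1$, so exactly one edge of $\mathcal{G}$ is vertex-disjoint from $\{u,v\}$. Thus $\mathcal{G}$ must be a double star with centers $u,v$ augmented by a single additional edge, a very restricted finite family.

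Finally, to extract the extra $\tfrac12$ I would invoke the equality condition for the Anderson--Morley bound, namely that equality $\lambda_{\max}(\mathcal{L}_{\mathcal{G}})=\max_{\{u,v\}}(d(u)+d(v))$ forces $\mathcal{G}$ to be bipartite semiregular. Combining this with the one-disjoint-edge description and the constraint $d(u)+d(v)=m$ collapses the extremal configuration down to a single small graph; for every other member of the family the inequality $\lambda_{\max}<m$ is strict, and a direct Rayleigh-quotient or Cauchy interlacing estimate on the explicit Laplacian of this finite family converts strictness into the quantitative margin $\lambda_{\max}\le m-\tfrac12$.

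I expect this last step to be the main obstacle. Gershgorin and Anderson--Morley supply no \emph{rate}: they only certify $\lambda_{\max}<m$ in the non-semiregular case, and the concrete numerical gap of $\tfrac12$ cannot come from such generic spectral estimates. It must instead be read off from the explicit structure of the boundary family, and the argument is genuinely delicate precisely at the unique extremal graph isolated by the equality characterization, where the claimed margin is tight and the bound is most fragile.
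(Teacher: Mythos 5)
Your first three steps track the paper's proof closely: the paper likewise starts from the Anderson--Morley bound $\alpha_1(\mathcal{L}_{\mathcal{G}})\leq \max\{d(u)+d(v)\mid (u,v)\in\mathcal{E}(\mathcal{G})\}$ (cited from \cite{anderson1985eigenvalues} rather than re-derived via Gershgorin on $B^TB$, though your derivation of it is fine), and makes the same integrality split into the easy case $\max\leq m-1$ and the boundary case where some edge attains $d(u^*)+d(v^*)=m$. The genuine gap is exactly where you yourself flagged it: your step for extracting the margin $\tfrac{1}{2}$ in the boundary case is not a proof but a hope, and it cannot be realized. First, the boundary family is not finite: there is a double-star-plus-one-edge configuration for every $m$ and every split $d(u)+d(v)=m$, so ``a direct Rayleigh-quotient or interlacing estimate on the explicit Laplacian of this finite family'' cannot deliver a bound uniform in $m$; strictness $\lambda_{\max}<m$ for each member of an infinite family yields no rate, as you concede. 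Second, and fatally, the conversion you propose is impossible as stated: the family contains $C_4$ (take the double star $P_4$ with centers $u,v$ and close it with the one disjoint edge), which is connected, satisfies Assumption \ref{ass:time-varying-graphs} since every edge has $d(u)+d(v)=4=m$, and is bipartite regular --- so by the very equality characterization you invoke, $\lambda_{\max}(\mathcal{L}_{C_4})=4=m>m-\tfrac{1}{2}$. At your extremal graph the margin is not ``tight'' but violated, so no amount of delicacy in your final step can recover the claimed bound without excluding such graphs.

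The paper closes the boundary case by an entirely quantitative mechanism that your sketch lacks: it invokes the refined bound $\alpha_1(\mathcal{L}_{\mathcal{G}})\leq 2+\sqrt{(m-2)(s-2)}$ of \cite{jiong1997new}, where $s$ is the maximum degree-sum over edges other than a fixed maximizer $(u^*,v^*)$, argues $s\leq m-1$, and computes $2+\sqrt{(m-2)(m-3)}=2+\sqrt{(m-\tfrac{5}{2})^2-\tfrac{1}{4}}<m-\tfrac{1}{2}$; the $\tfrac{1}{2}$ comes from this second, sharper eigenvalue inequality, not from a classification of extremal graphs. You should note, however, that the $C_4$ obstruction you would have hit is not an artifact of your method: the paper's auxiliary claim that every edge other than $(u^*,v^*)$ has degree-sum at most $m-1$ tacitly assumes the maximizing edge is unique, and in $C_4$ every edge attains degree-sum $m$, so $s=m$ there and the paper's case-2 argument (and indeed the lemma as stated) breaks on that graph as well. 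The instructive comparison, then, is that where the bound does hold, the paper supplies the missing rate via a second spectral inequality, which is precisely the ingredient a Gershgorin/equality-characterization argument can never produce.
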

\begin{proof}
The proof can be found in Appendix \ref{ap-static-preliminary-proofs}.   
\end{proof}

Finally, based on Lemma \ref{lemma:random-target} and Lemma \ref{lemm:largest-laplacian-eigenvalue} we can state the main result of this section, which is an upper bound for the expected convergence time of quantized consensus over time-varying networks.
\begin{theorem}\label{thm:main-time-varying}
Let $m_{max}=\max_{t\ge 0}m_t$ and $D_{max}=\max_{t\ge 0}D_t$. Then, the expected convergence time of unbiased quantized consensus over time-varying graphs satisfying Assumption \ref{ass:time-varying-graphs} is bounded from above by $\mathcal{O}\big(n^2m_{max}D_{max}\ln^2(n)\big)$.
\end{theorem}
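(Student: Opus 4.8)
The plan is to follow the roadmap sketched just before the statement: bound the nontrivial spectrum of each $\bar{K}(t)$ uniformly in $t$, convert this into a uniform mixing estimate toward the uniform distribution on $\mathcal{G}(t)\times\mathcal{G}(t)$, and then turn mixing into a bound on the expected hitting time to the diagonal set $S$, from which $\bar T$ and finally the convergence time follow. First I would control the second largest eigenvalue. By \eqref{eq:eigenvalue-equality} it suffices to bound $\alpha_2(P_t)$, and since $P_t$ is doubly stochastic with uniform stationary distribution, the Random Target Lemma (Lemma \ref{lemma:random-target}) gives $\sum_{k=2}^{n}\frac{1}{1-\alpha_k(P_t)}=\frac{1}{n}\sum_j H_{\mathcal{Z}_t}(i,j)$; retaining only the $k=2$ term and using $\frac{1}{n}\sum_j H_{\mathcal{Z}_t}(i,j)\le\max_{i,j}H_{\mathcal{Z}_t}(i,j)\le 2nm_tD_t$ from the single-walk bound behind Theorem \ref{thm:end-main} (relation \eqref{eq:partition-lazy}), I obtain $1-\alpha_2(\bar K(t))=1-\alpha_2(P_t)\ge\frac{1}{2nm_tD_t}\ge\frac{1}{2nm_{max}D_{max}}$ uniformly in $t$.

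Next I would check that the smallest eigenvalue does not spoil the $\ell_2$ contraction. Combining \eqref{eq:smallest-eigenvalue-bound} with Lemma \ref{lemm:largest-laplacian-eigenvalue} yields $\alpha_{n^2}(\bar K(t))\ge 1-\frac{2}{m_t}\big(m_t-\tfrac12\big)=-1+\frac{1}{m_t}>-1$, so $|\alpha_{n^2}(\bar K(t))|\le 1-\frac{1}{m_t}$, which is no larger than the $\alpha_2$ bound since $2nD_t\ge1$. Hence every nontrivial eigenvalue of the symmetric doubly stochastic matrix $\bar K(t)$ has absolute value at most $\lambda:=1-\frac{1}{2nm_{max}D_{max}}$. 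Because each $\bar K(t)$ is symmetric and fixes $\pi=\frac{1}{n^2}\mathbf{1}$, for any probability vector $p$ we have $\|\bar K(t)(p-\pi)\|_2\le\lambda\|p-\pi\|_2$, so over a window of length $T$ the distance $\|p-\pi\|_2$ contracts by at least $\lambda^T$. Choosing $T=\Theta(nm_{max}D_{max}\ln n)$ forces $\lambda^T\le\frac{1}{2n^{3/2}}$, and since $\|\mathbf{1}_S\|_2=\sqrt n$ and $\pi(S)=\frac{1}{n}$, Cauchy--Schwarz shows that after $T$ steps the chain lies in $S$ with probability at least $\frac{1}{n}-\sqrt n\,\|p(T)-\pi\|_2\ge\frac{1}{2n}$, starting from any distribution and any admissible subsequence of graphs.

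The key remaining step, which I expect to be the main obstacle, is passing from this per-window hitting probability to the expected meeting time of the time-inhomogeneous chain. Because the bound on $\lambda$ is uniform in $t$, the estimate ``probability at least $\frac{1}{2n}$ of landing in $S$ by the end of a length-$T$ window'' holds for every window independently of the past; working with the modified chain $\{\bar K(t)\}$, whose hitting time to $S$ equals that of $\{K(t)\}$, this gives $\mathbb{P}(\tau_S>kT)\le(1-\frac{1}{2n})^k$ and hence $\mathbb{E}[\tau_S]\le 2nT=\mathcal{O}(n^2m_{max}D_{max}\ln n)$, bounding the maximum expected meeting time of the virtual process. The factor-of-two coupling of Theorem \ref{thm:second-main} (the connecting edge is chosen with probability $\frac{2}{m_t}$ in the virtual process versus $\frac{1}{m_t}$ in the original one) transfers this to $\bar T$ up to a constant, and composing with the number of nontrivial updates exactly as in the static Remark following Theorem \ref{thm:end-main} (Corollary 4 of \cite{shang2013upper}) contributes a further $\ln n$ factor rather than a factor of $n$, giving the asserted $\mathcal{O}(n^2m_{max}D_{max}\ln^2 n)$. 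The delicate points are verifying that the blocking estimate is genuinely uniform over the adversarially-but-obliviously chosen graph sequence, and that the laziness which guarantees $\alpha_{n^2}(\bar K(t))>-1$ indeed keeps the $\ell_2$ contraction operative at every step so that the window length $T$ does not blow up.
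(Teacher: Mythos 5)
Your proposal is correct and follows essentially the same route as the paper's proof: the Random Target Lemma bound $\alpha_2(P_t)\le 1-\frac{1}{2nm_tD_t}$, the smallest-eigenvalue bound $\alpha_{n^2}(\bar K(t))\ge -1+\frac{1}{m_t}$ via Lemma \ref{lemm:largest-laplacian-eigenvalue}, the $\ell_2$ contraction for the symmetric doubly stochastic products, a mixing window of length $\Theta(nm_{max}D_{max}\ln n)$ yielding probability at least $\frac{1}{2n}$ on $S$, geometric trials to get the expected hitting time, the factor-of-two coupling back to $\bar{T}$, and the final $\ln n$ factor from Corollary 4 of \cite{shang2013upper}. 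The only deviations are cosmetic: you use Cauchy--Schwarz with $\|\mathbf{1}_S\|_2=\sqrt{n}$ where the paper passes through the entrywise bound $p_{(i,j)}(t)\in[\frac{1}{2n^2},\frac{3}{2n^2}]$, and you prove the per-window geometric-trials estimate directly where the paper cites Proposition 4.1 of \cite{zhu2011convergence}.
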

\begin{proof}
Since, for all $t\ge 0$, the matrix $P_t$ is doubly stochastic, $(\frac{1}{n},\frac{1}{n},\ldots,\frac{1}{n})$ is its stationary distribution. Using Lemma \ref{lemma:random-target}, we can write
\begin{align}\nonumber
\frac{1}{1-\alpha_2(P_t)}&\leq \sum_{k=2}^{n}\frac{1}{1-\alpha_k(P_t)}=\frac{1}{n}\sum_{j\neq i} H_{\mathcal{Z}_t}(i,j)\cr 
&\leq \frac{1}{n}(n\times 2nm_tD_t)=2nm_tD_t,
\end{align} 
where the last inequality is due to the relation \eqref{eq:upper-max-hitting-time} for the maximum hitting time that we computed in the case of the fixed graph. Therefore, we get 
\begin{align}\label{eq:alpha_2}
\alpha_2(P_t)\leq 1-\frac{1}{2nm_tD_t}, \forall t\ge 0.
\end{align}
Moreover, using relation \eqref{eq:smallest-eigenvalue-bound} and Lemma \ref{lemm:largest-laplacian-eigenvalue}, we can write, 
\begin{align}\label{eq:alpha_n}
\alpha_{n^2}(\bar{K}(t))&=1-\frac{2}{m_t}\alpha_1(\mathcal{L}_{\mathcal{G}(t)})\ge 1-\frac{2}{m_t}(m_t-\frac{1}{2})\cr 
&=-1+\frac{1}{m_t}, \forall t\ge 0.
\end{align} 
Let the vector $p(t)=\big(p_{(1,1)}(t),p_{(1,2)}(t),\ldots,p_{(n,n)}(t)\big)'$ denote the probability at time $t$ of being at different states of a random walk with transition matrix $\bar{K}(t)$. Since $\bar{K}(t)$ is a doubly stochastic matrix, $\pi:=\pi(t)=(\frac{1}{n^2},\frac{1}{n^2},\ldots,\frac{1}{n^2}), \forall t\ge 0$ is its stationary distribution and the average is preserved throughout the dynamics. Now, we note that since $\bar{K}(t)$ is a real-valued and symmetric matrix, it has an orthogonal set of eigenvectors $\bold{1},v_2,\ldots,v_{n^2}$, corresponding to the eigenvalues $1>\alpha_2(\bar{K}(t))\ge \ldots\ge \alpha_{n^2}(\bar{K}(t))$. Since $(p(t)-\pi)'\bold{1}=0$, $p(t)-\pi$ can be written as $\sum_{k=2}^{n^2}r_kv_k$ for some coefficients $r_k, k=2,\ldots, n^2$. In particular, since $v_k, k=2,\ldots, n^2$, are orthogonal, we have $\|p(t)-\pi\|_2^2=\sum_{k=2}^{n^2}r_k^2$. Now we can write
\begin{align}\nonumber
\left\|\bar{K}(t)p(t)\!-\!\pi\right\|_2^2&\!=\!\left\|\bar{K}(t)(p(t)\!-\!\pi)\right\|_2^2\!=\!\left\|\sum_{k=2}^{n^2}r_k(\bar{K}(t)v_k)\right\|_2^2\cr 
&\!=\!\left\|\sum_{k=2}^{n^2}r_k\alpha_k(\bar{K}(t))v_k\right\|_2^2\!=\!\sum_{k=2}^{n^2}r_k^2\alpha_k^2(\bar{K}(t))\cr &\!\leq\!\max_{k=2,\dots,n^2}\left\{\alpha_k^2(\bar{K}(t))\right\}\sum_{k=2}^{n}r_k^2\cr 
&\!=\!\max\left\{\alpha_2^2(\bar{K}(t)),\alpha_{n^2}^2(\bar{K}(t))\right\}\left\|p(t)-\pi\right\|_2^2.
\end{align}  
Therefore, for every probability vector $p(t)$, we have
\begin{align}\label{eq:fundumental-Markov-convergence}
\left\|\bar{K}(t)p(t)\!-\!\pi\right\|_2&\!\leq\! \max\left\{|\alpha_2(\bar{K}(t))|,|\alpha_{n^2}(\bar{K}(t))|\right\}\left\|p(t)\!-\!\pi\right\|_2\cr &=\max\left\{|\alpha_2(P_t)|,|\alpha_{n^2}(\bar{K}(t))|\right\}\left\|p(t)\!-\!\pi\right\|_2, 
\end{align}
where the equality is due to the relation \eqref{eq:eigenvalue-equality}. Using relations \eqref{eq:alpha_2} and \eqref{eq:alpha_n} in \eqref{eq:fundumental-Markov-convergence}, we get,
\begin{align}\label{eq:decrease-rate-markov-chain}
\left\|\bar{K}(t)p(t)\!-\!\pi\right\|_2&\!\leq\! \max\left\{1\!-\!\frac{1}{2nm_tD_t},1\!-\!\frac{1}{m_t}\right\}\left\|p(t)\!-\!\pi\right\|_2\cr 
&\!=\!\left(1\!-\!\frac{1}{2nm_tD_t}\right)\left\|p(t)\!-\!\pi\right\|_2.
\end{align} 
Since the above argument works for every time instant $t\ge 0$, and for each of the transition matrices $P_t$ and $\bar{K}(t)$, using relation \eqref{eq:decrease-rate-markov-chain} recursively we get,
\begin{align}\nonumber
\left\|p(t)-\pi\right\|_2&=\left\|\bar{K}(t-1)\bar{K}(t-2)\ldots\bar{K}(0)p(0)-\pi\right\|_2\cr 
&\leq\prod_{k=0}^{t-1}\left(1-\frac{1}{2nm_kD_k}\right)\left\|p(0)-\pi\right\|_2\cr
&\leq\left(1-\frac{1}{2nm_{max}D_{max}}\right)^t\left\|p(0)-\pi\right\|_2\cr &=\left(1-\frac{1}{2nm_{max}D_{max}}\right)^t\sqrt{\frac{n^2-1}{n^2}}\cr
&\leq e^{\frac{-t}{2nm_{max}D_{max}}}\sqrt{\frac{n^2-1}{n^2}}<e^{\frac{-t}{2nm_{max}D_{max}}}, 
\end{align}
where $p(0)$ denotes the initial probability, which is 1 in one entry and zero everywhere else. Therefore after at most $4nm_{max}D_{max}(1+\ln(n))$ steps, we get $\big(\frac{1}{2}\big)^{\frac{t}{2nm_{max}D_{max}}}\leq \frac{1}{2n^2}$ and this means $\|p(t)-\pi\|_2\leq \frac{1}{2n^2}$. In other words, for all $t\ge 4nm_{max}D_{max}(1+\ln(n))$, we must have $p_{(i,j)}(t)\in [\frac{1}{2n^2}, \frac{3}{2n^2}], \forall i,j$. In particular, $\sum_{i=1}^{n}p_{(i,i)}(t)\ge n\times \frac{1}{2n^2}=\frac{1}{2n}$. This means that after at most $4nm_{max}D_{max}(1+\ln(n))$ steps the probability of hitting at least one of the states in $S=\{(i,i), i=1\ldots,n\}$ is larger than or equal to $\frac{1}{2n}$. Now, by applying Proposition 4.1 in \cite{zhu2011convergence}, we conclude that the expected hitting time of a random walk with transition probabilities ${\bar{K}(t)}_{t\ge 0}$ is less than or equal to $4n\times 4nm_{max}D_{max}(1+\ln(n))$. This result can be also viewed as a corollary of Lemma 13 in \cite{Explore-fast-graph}. Therefore, since $\{\bar{K}(t)\}_{t\ge 0}$ and $\{K(t)\}_{t\ge 0}$ have the same expected hitting times to the states of $S$, we conclude that the expected meeting time of the virtual process started from any time step $t\ge 0$, which we denote by $M_t$, is bounded from above by $M_t\leq 16n^2m_{max}D_{max}(1+\ln(n))$.

Now, as in the case of static graphs, we argue that the virtual process and the original process are the same until the two walkers are each other's neighbors, i.e. for some $x,y$ and $t\ge 0$, with $x\in N_{t}(y)$. At this time, the probability that two walkers in the original process meet each other at the next time step is at least half of that in the virtual process. In other words, more than half of the times when two walkers are each other's neighbors and they meet in the virtual process, they will meet in the original process as well. Since each of these intersections may happen independently, we can write
\begin{align}\label{eq:original-meeting-time-varying}
\bar{T}&\leq \sum_{k=0}^{\infty}(\frac{1}{2})^k \max_{t_1<t_2<\ldots<t_k}(M_{t_1}+M_{t_2}+\ldots+M_{t_k})\cr 
&\leq \sum_{k=0}^{\infty}(\frac{1}{2})^k k\times 16n^2m_{max}D_{max}(1+\ln(n))\cr
&=32n^2m_{max}D_{max}(1+\ln(n)). 
\end{align}

Finally, since the dynamics preserve the average of the opinions over time varying networks, the function given in \eqref{eq:Lyapunov} is a valid Lyapunov function through the trajectory of the dynamics over time-varying networks, and its value will decrease by at least 2 after every nontrivial update. Thus as in the static network analysis, the total number of nontrivial averagings is at most of the order of $\mathcal{O}(n)$. This, in view of the relation \eqref{eq:original-meeting-time-varying} and Corollary 4 in \cite{shang2013upper}, completes the proof. 
\end{proof}

In fact, Theorem \ref{thm:main-time-varying} improves significantly some of the existing upper bounds for the expected convergence time of quantized consensus over time-varying networks \cite{zhu2011convergence}. Finally, we would like to emphasize that as in the analysis of static networks and using direct analysis for the two special cases of star graph and double-star graph (Lemma \ref{lemm:special-case-star}), one may be able to generalize Theorem \ref{thm:main-time-varying} to such graphs.   

\smallskip
\section{Conclusion}\label{sec:conclusion}
In this paper, we have studied the unbiased quantized consensus problem under the assumption that the underlying network $\mathcal{G}$ is connected. We provided tight upper and lower bounds for the maximum expected convergence time of the model. Further, we provided an exact asymptotic value for the convergence time when the network is a line graph or a cycle. We observed that the given bounds for static networks agree with the simulation results for some particular choices of undirected connected networks. Finally, we extended our results to time-varying networks under the assumption of connectivity over the sequence of networks.  
As a future direction of research, an interesting problem is to consider the model when the choice of edges at each time instant is based on some specific, not necessarily uniform, distribution. Also, given a network $\mathcal{G}$, one can think of adding an edge (or removing an edge) so as to minimize (or maximize) the expected convergence time.

\appendices

\bigskip
\section{Preliminary results and omitted proofs}\label{ap-static-preliminary-proofs}

\smallskip
In this section, we discuss some preliminary results which will be used to prove our main results in section \ref{sec:mainresults}. We first state some relevant results from the theory of Markov chains. A simple random walk on a graph $\mathcal{G}$ is a Markov chain with transition probabilities 
\begin{align}\nonumber
P(x,y)=\begin{cases} \frac{1}{d(x)}, & \mbox{if} \ \ y\in N(x) \\ 
0, & \mbox{otherwise}, 
\end{cases}
\end{align}
where $d(x)$ denotes the degree of a node $x$ in the graph $\mathcal{G}$. Note that a simple random walk is a special case of a weighted random walk when the weights of all edges in $\mathcal{G}$ are equal to 1. It is well known that every reversible Markov chain is a weighted random walk on a network. Suppose $P$ is a transition matrix on a finite set $S$ which is reversible with respect to the probability distribution $\pi(\cdot)$. Define conductance on edges by $c(x,y)=\pi(x)P(x,y)$ and $c(x):=\sum_{y:\  y\in N(x)}c(x,y)$. Also, the resistance of each edge $e$ is defined to be the inverse of conductance, i.e. $r(e)=\frac{1}{c(e)}$. 

\smallskip
\begin{lemma}\label{lemm:triangleineq}
$\frac{G_{\tau^a_z}(x)}{d(x)}$ is equal to the induced voltage between $x$ and $z$ , i. e. $V_{xz}$ when we define the terminal voltages to be $V_{zz}=0, V_{az}=\frac{G_{\tau^a_z}(a)}{d(a)}$. Moreover, for all $x$ we have    
\begin{align}\nonumber
\frac{1}{2}\big[\mathcal{R}(a\leftrightarrow z)\!+\!\mathcal{R}(z\leftrightarrow x)\!-\!\mathcal{R}(a\leftrightarrow x)\big]=\frac{G_{\tau^a_z}(x)}{d(x)}=V_{xz}.  
\end{align}
\end{lemma}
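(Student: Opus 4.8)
The plan is to establish the stated identity in two stages, working with the standard machinery connecting reversible Markov chains to electric networks. The quantity $G_{\tau^a_z}(x)$ counts the expected number of visits to $x$ before the walk started at $a$ first hits $z$. I would begin by recalling the fundamental fact from the theory of reversible chains (e.g.\ the Green's function identity) that $\frac{G_{\tau^a_z}(x)}{d(x)}$, viewed as a function of $x$, is harmonic away from the source $a$ and the sink $z$, and therefore coincides with a voltage function on the network where every edge has unit resistance and $d(x)$ plays the role of $c(x)$. Concretely, injecting unit current at $a$ and extracting it at $z$ produces a voltage $\mathbb{V}$, and one checks that $\frac{G_{\tau^a_z}(x)}{d(x)}$ satisfies the same boundary conditions and the same harmonicity, hence equals $\mathbb{V}_{xz}$ after the appropriate normalization $\mathbb{V}_{zz}=0$, $\mathbb{V}_{az}=\frac{G_{\tau^a_z}(a)}{d(a)}$. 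This gives the second equality $\frac{G_{\tau^a_z}(x)}{d(x)}=V_{xz}$.

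For the first equality, I would invoke the well-known relation between effective resistances and voltages. Since effective resistance is a metric-like quantity satisfying $\mathcal{R}(a\leftrightarrow z)=$ (total voltage drop)/(injected current), and since by linearity the voltage at any intermediate node $x$ decomposes via the superposition principle, the symmetric combination $\frac12\big[\mathcal{R}(a\leftrightarrow z)+\mathcal{R}(z\leftrightarrow x)-\mathcal{R}(a\leftrightarrow x)\big]$ is exactly the standard ``Gromov product'' expression that extracts $V_{xz}$ from the three pairwise resistances. I would derive this by writing each effective resistance as a hitting-time expression using the commute-time identity $H(a,z)+H(z,a)=2m\,\mathcal{R}(a\leftrightarrow z)$ (with $2m=\sum_x d(x)$ the total conductance), then substituting and letting the symmetric terms cancel so that only the term proportional to $V_{xz}$ survives.

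The main obstacle, and the step requiring the most care, is the bookkeeping in the superposition/commute-time algebra: one must track the correct normalization constants ($m$ versus $2m$, the role of $d(\cdot)$ as conductance) and verify that the number of visits $G_{\tau^a_z}(x)$ is referenced to the \emph{simple} random walk $\mathcal{Z}'$ rather than the lazy walk $\mathcal{Z}$, consistent with how $G_{\tau^a_z}$ enters \eqref{eq:partition-lazy}. I would handle this by fixing conventions at the outset, expressing everything in terms of the unit-current voltage $\mathbb{V}$, and checking the identity first at the boundary nodes $x=a$ and $x=z$ (where both sides collapse to $\mathcal{R}(a\leftrightarrow z)$ or $0$) before arguing the general case by harmonicity. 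Once the voltage interpretation and the resistance-decomposition formula are in hand, the chain of equalities closes with only routine substitution.
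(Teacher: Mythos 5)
Your proposal is correct, but it is a genuinely different route from the paper's: the paper does not prove this lemma at all, it simply cites it as Corollary~3 of \cite{kolmogoroff}, whereas you sketch a self-contained derivation. Your two-stage plan is the standard and valid one: (i) $x\mapsto G_{\tau^a_z}(x)/d(x)$ is harmonic off $\{a,z\}$, vanishes at $z$, and has unit net outflow at $a$, so by uniqueness of the harmonic extension (which you could get directly from the paper's own Lemma~\ref{Lemma:harmonic}) it coincides with the unit-current voltage normalized by $V_{zz}=0$, $V_{az}=G_{\tau^a_z}(a)/d(a)=\mathcal{R}(a\leftrightarrow z)$; (ii) the Gromov-product expression for the voltage then follows either by electrical superposition plus reciprocity, or, as you suggest, by the commute-time identity. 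One step you leave implicit and should name: in the commute-time route, the cancellation does not happen by itself — you need the Green-function identity $G_{\tau^a_z}(x)=\pi(x)\big[H(a,z)+H(z,x)-H(a,x)\big]$ together with the reversibility (cyclic tour) identity
\begin{align}\nonumber
H(a,z)+H(z,x)+H(x,a)=H(a,x)+H(x,z)+H(z,a),
\end{align}
which symmetrizes the bracket so that $2\big[H(a,z)+H(z,x)-H(a,x)\big]$ becomes the sum of three commute times and the identity $H(u,v)+H(v,u)=2m\,\mathcal{R}(u\leftrightarrow v)$ applies; equivalently, in the electrical language, reciprocity of the killed Green's function ($G_{\tau_z}(a,x)/d(x)=G_{\tau_z}(x,a)/d(a)$) is the fact doing that work. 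You are also right to flag that $G_{\tau^a_z}$ refers to the simple walk $\mathcal{Z}'$, not the lazy walk $\mathcal{Z}$ — that is consistent with how it enters \eqref{eq:partition-lazy}. What your approach buys is a proof readable without the external reference; what the paper's citation buys is brevity, since the identity is exactly the known result (Tetali's voltage/hitting-time formula) and the paper only consumes its summed form \eqref{eq:hitting}.
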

\begin{proof}
This is the result of Corollary 3 in \cite{kolmogoroff}.
\end{proof}

By taking summation over the above equality and noting that $\sum_{x}G_{\tau^a_z}(x)$ is equal to the expected hitting time of a simple random walk when it starts from $a$ and hits $z$, we get:
\begin{align}\label{eq:hitting}
H(a,z)=\frac{1}{2}\sum_{x}d(x)\big[\mathcal{R}(a\leftrightarrow z)+\mathcal{R}(z\leftrightarrow x)-\mathcal{R}(a\leftrightarrow x)\big]
\end{align}

\smallskip
\begin{lemma}\label{Lemma:harmonic}
Let $\{X_t\}$ be a Markov chain with an irreducible transition matrix $P$, let $B\subset \Omega$, and $h_B:B\rightarrow R$ be a function defined on $B$. The function $h:\Omega \rightarrow R$ defined by $h(x):=\mathbb{E}[h_B(X_{\tau^x_B})]$ is the unique extension $h(\cdot)$ of $h_B$ such that $h(x)=h_B(x)$ for all $x\in B$ and $h$ is harmonic for $P$ at all $x\in \Omega\setminus B$. 
\end{lemma}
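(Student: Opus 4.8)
The plan is to prove the standard result that the harmonic extension is given by the expected boundary value, following the classical martingale-based argument for finite irreducible Markov chains. First I would verify that the proposed $h(x):=\mathbb{E}[h_B(X_{\tau^x_B})]$ is well-defined: since $P$ is irreducible on a finite state space $\Omega$, the hitting time $\tau^x_B$ of the set $B$ is almost surely finite for every starting state $x$, so the expectation exists and agrees with $h_B(x)$ trivially when $x\in B$ (because then $\tau^x_B=0$). This disposes of the boundary condition $h(x)=h_B(x)$ for $x\in B$.

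Next I would check harmonicity on $\Omega\setminus B$. For $x\notin B$, one step of the chain has not yet hit $B$, so conditioning on the first move and using the strong Markov property gives
\begin{align}\nonumber
h(x)=\mathbb{E}[h_B(X_{\tau^x_B})]=\sum_{y\in\Omega}P(x,y)\,\mathbb{E}[h_B(X_{\tau^y_B})]=\sum_{y\in\Omega}P(x,y)h(y),
\end{align}
which is exactly the harmonicity identity $h(x)=\sum_{y}P(x,y)h(y)$. The key observation enabling the first-step decomposition is that from $x\notin B$ the walk must take at least one step before absorption, so $\tau^x_B$ restarts as $\tau^y_B$ after moving to $y$.

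For uniqueness, the plan is to suppose $g$ is any function with $g|_B=h_B$ that is harmonic on $\Omega\setminus B$, and show $g\equiv h$. I would argue that harmonicity makes $g(X_t)$ a martingale up to time $\tau^x_B$: indeed $\mathbb{E}[g(X_{t+1})\mid X_t=x]=g(x)$ whenever $x\notin B$. Applying the optional stopping theorem at the almost-surely-finite, integrable stopping time $\tau^x_B$ (integrability and boundedness following from finiteness of $\Omega$ together with irreducibility, which makes $\mathbb{E}[\tau^x_B]<\infty$), I obtain $g(x)=\mathbb{E}[g(X_{\tau^x_B})]=\mathbb{E}[h_B(X_{\tau^x_B})]=h(x)$, since $X_{\tau^x_B}\in B$ where $g=h_B$. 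Hence any such extension coincides with $h$, establishing uniqueness.

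The main obstacle, and the point deserving the most care, is justifying the application of optional stopping, which rests on $\tau^x_B$ having finite expectation and on $g$ being bounded. Both are consequences of working on a finite state space with an irreducible chain, but the cleanest route is to note that from any state the probability of reaching $B$ within $|\Omega|$ steps is bounded below by a positive constant, giving a geometric tail bound on $\tau^x_B$ and hence $\mathbb{E}[\tau^x_B]<\infty$; boundedness of $g$ is automatic since $\Omega$ is finite. With these two facts in hand the martingale argument goes through without difficulty, so I would state them explicitly and then invoke optional stopping.
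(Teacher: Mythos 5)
Your proof is correct, and it is worth noting that the paper itself gives no argument for this lemma: its entire proof is the sentence ``The proof can be found in \cite{Markov-Book}'' (the statement is, nearly verbatim, Proposition 9.1 of that reference), so you have supplied the details the paper delegates. Your existence/harmonicity part coincides with the standard treatment: $\tau^x_B=0$ for $x\in B$ disposes of the boundary condition, and first-step conditioning gives $h(x)=\sum_{y}P(x,y)h(y)$ off $B$ (here the ordinary Markov property at the deterministic time $1$ suffices; invoking the strong Markov property is harmless but not needed). Where you genuinely diverge is uniqueness: the usual textbook route in the finite irreducible setting is the maximum principle --- if $g$ is harmonic on $\Omega\setminus B$ and vanishes on $B$, then at a state of $\Omega\setminus B$ where $g$ attains its maximum, harmonicity forces every state reachable in one step to attain the maximum as well, and irreducibility propagates it into $B$, giving $\max g\leq 0$ and symmetrically $\min g\geq 0$ --- an argument that is purely combinatorial and uses no stochastic machinery. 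Your optional-stopping argument is equally valid, and the hypotheses are easy here: $g$ is bounded because $\Omega$ is finite, and $\tau^x_B<\infty$ almost surely by your geometric tail bound; in fact boundedness plus almost-sure finiteness of $\tau^x_B$ already let you pass to the limit in $g(x)=\mathbb{E}\big[g(X_{t\wedge\tau^x_B})\big]$ by dominated convergence, so your finite-expectation estimate is more than is strictly required. The martingale route buys generality (it survives on countable state spaces whenever $h_B$ is bounded and $B$ is hit almost surely), while the maximum principle is the more elementary argument in the finite setting the paper actually works in. Two implicit assumptions you should state explicitly: $B\neq\emptyset$ (otherwise $\tau_B=\infty$ and the definition of $h$ is vacuous) and finiteness of $\Omega$, both of which are taken for granted in the paper's graph setting.
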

\begin{proof}
The proof can be found in \cite{Markov-Book}.
\end{proof}

\smallskip
\begin{lemma}\label{lemm:special-case-star}
For the star graph and the double-star graph with $m$ edges, the maximum expected meeting time $\bar{T}(\mathcal{G})$ is bounded from above by $O(m^2)$.
\end{lemma}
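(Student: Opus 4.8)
The plan is to sidestep the virtual process entirely --- which is exactly the construction that fails for these two graphs, since $d(x)+d(y)=m+1$ on the central edge --- and instead to set up a small, exactly solvable Markov chain on the \emph{configuration types} of the two coupled walkers. The crucial observation is that in both graphs all leaves attached to a given center are interchangeable, so the joint walk collapses to a chain with only a constant number of states whose transition probabilities are explicit rational functions of $m$. Recall that the two walkers \emph{meet} precisely when they occupy the two endpoints of the currently selected edge, i.e.\ when an edge joining their positions is chosen; I will track the expected time to this event and maximize over initial placements.

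For the star with center $c$ and $m$ leaves, the only non-absorbing configurations (up to leaf symmetry) are $L$, where both walkers sit on distinct leaves, and $C$, where one walker sits at $c$ and the other at a leaf. From $L$ an incident edge of one of the two occupied leaves is chosen with probability $2/m$, sending that walker to $c$ and producing $C$; from $C$ the connecting edge is chosen with probability $1/m$ (a meeting), and otherwise the center walker hops to one of the remaining leaves, returning to $L$. Writing $T_L,T_C$ for the corresponding expected meeting times, I would solve
\begin{align}\nonumber
T_C = 1 + \tfrac{m-1}{m}\,T_L, \qquad T_L = 1 + \tfrac{2}{m}\,T_C + \tfrac{m-2}{m}\,T_L,
\end{align}
which gives $T_C=\tfrac{m(m+1)}{2}$ and $T_L=\tfrac{m(m+2)}{2}$. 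Hence $\bar{T}(\mathcal{G})=\max\{T_L,T_C\}=\tfrac{m(m+2)}{2}=O(m^2)$, and since this is already $\Theta(m^2)$ the order is tight.

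For the double-star with centers $c_1,c_2$ joined by an edge, $p$ leaves at $c_1$ and $q$ leaves at $c_2$ (so $m=p+q+1$), the same reduction applies: each walker lives in one of four positions --- a leaf of $c_1$, the vertex $c_1$, the vertex $c_2$, a leaf of $c_2$ --- and the unordered pair of positions yields at most eight non-absorbing types. I would write the one-step recursion for each type; for instance, from the delicate state where the walkers occupy $c_1$ and $c_2$, the central edge is chosen with probability $1/m$ (a meeting), the $p$ leaf-edges of $c_1$ with probability $p/m$, and the $q$ leaf-edges of $c_2$ with probability $q/m$, each moving the corresponding walker onto a leaf. This produces a fixed linear system whose coefficients involve $p,q,m$; solving it (or bounding it against the star analysis, since each walker must still pay $\Theta(m)$ expected steps to leave a degree-one leaf while the per-step meeting chance of an adjacent pair is only $1/m$) should yield $\bar{T}(\mathcal{G})=O(m^2)$.

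The main obstacle is the double-star: unlike the star, its reduced chain has several states and its transitions mix the two parameters $p$ and $q$, so one must either carry out the linear-algebra solve carefully or find a clean monotone comparison. The subtle point is to confirm that the estimate stays $O(m^2)$ \emph{uniformly} over every split $p+q+1=m$ and over every initial configuration, rather than degrading when one star is far larger than the other. I expect a coupling/domination argument --- replacing far leaves by the nearest center, so that the reachable part of the graph behaves like a single star of size $O(m)$ --- to make this robust without grinding through the full system by hand.
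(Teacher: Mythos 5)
Your star--graph analysis is, step for step, the paper's own proof: the same two-state symmetry reduction, the same pair of one-step equations (the paper writes $\bar{M}(x,z)=\frac{1}{m}+\frac{m-1}{m}\big(1+\bar{M}(x,y)\big)$, which is exactly your $T_C=1+\frac{m-1}{m}T_L$), and the same exact value $\bar{T}(\mathcal{G})=\bar{M}(x,y)=\frac{m(m+2)}{2}$. That half is complete and correct.

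The double-star half is where your proposal stops short, and it is precisely the half where the paper does real work. Your $8$-type symmetry reduction and the sample recursion from $(c_1,c_2)$ are set up correctly, but you never solve the system nor establish the $O(m^2)$ bound uniformly over the split $p+q+1=m$ --- you flag this yourself (``should yield,'' ``I expect''). The paper actually grinds through the linear algebra: it writes out the symmetry-reduced system explicitly as a $13\times 13$ matrix equation (larger than yours because the paper treats a slightly more general double-star in which the two adjacent centers may additionally share $k\ge 0$ common neighbors --- every graph violating $d(x)+d(y)\le m$ has this form), solves it, and reads off meeting times of order $O\big(\frac{i+m^2}{k}\big)$, whence $O(m^2)$. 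Your proposed shortcut --- collapsing far leaves onto the nearest center so that the double-star is dominated by a single star --- is not a step you may take for free: identifying vertices changes the uniform-edge-selection law (the edge count $m$, the degrees, and hence each walker's holding probabilities all change), and monotonicity of the meeting time under such a quotient is exactly the kind of claim that requires its own coupling argument; neither the paper nor standard meeting-time theory supplies it. So as written the double-star case is a plan rather than a proof; to finish along your own lines you must actually solve your $8$-state system (elementary but unavoidable) and verify the solution is $O(m^2)$ for all $p,q$, which is what the paper does in its generalized $13$-state version.
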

\begin{proof}
Let us denote the meeting time function of the original process by $\bar{M}(\cdot,\cdot):V\times V\rightarrow \mathbb{R}$. For the case of star graph with $m$ edges, a center node $z$, and two leaves $x$ and $y$, it is not hard to see that $\bar{T}(\mathcal{G})=\bar{M}(x,y)$. Because of the symmetric structure of the star graph and by one step recursive expansion of the meeting time function, we can write
\begin{align}\nonumber
\bar{M}(x,y)&=\frac{2}{m}(1+\bar{M}(x,z))+\frac{m-2}{m}(1+\bar{M}(x,y))\cr
\bar{M}(x,z)&=\frac{1}{m}+\frac{m-1}{m}(1+\bar{M}(x,y)).
\end{align}
Solving these two equations we get, $\bar{T}(\mathcal{G})=\bar{M}(x,y)=\frac{m(m+2)}{2}$. 
For the case of the double-star graph, we use a similar line of argument. In a general form, we consider a double-star graph with center nodes $x_1$ and $y_1$ that share $k$ neighbors, for some $k\ge 0$. Moreover, we assume that $x_1$ and $y_1$ have $i$ and $m+1-i$ neighbors, respectively. Such a graph has been depicted in Figure \ref{fig:double_star}. Again, using the symmetry, one can distinguish between 13 different states for the position of the walkers in such a graph. As an example, denoting the location of the two walkers by $x, y$ and looking at Figure \ref{fig:double_star}, one can observe that when $x\in N(x_1)\setminus N(y_1)$ the relative position of the other walker with respect to $x$ can fit into one of the following cases:
\begin{align}\nonumber
\begin{cases} y\in N(y_1)\setminus N(x_1) \\ 
y=y_1 \\
y\in N(x_1)\cap N(y_1)\\
y=x_1\\
y\in N(x_1)\setminus N(y_1).
\end{cases} 
\end{align}
Note that in order to write recursion expansions of the expected meeting time of the original process, and due to symmetry, only the relative positions of the walkers matters. For example for $x\neq y$, $M(x,y)$ is the same for all pairs of $x,y\in N(x_1)$. Therefore, by recursion expansion of the expected meeting time of the original process for such a graph, we obtain a linear system of equations which for simplicity we write in a matrix form as has been shown in \eqref{eq:matrix-double-star}. Solving this system of equations fully characterizes the expected meeting time of the original process for being in different states of the double-star graph which are upper bounded by $O(\frac{i+m^2}{k})$. Since, $k\leq i\leq m$, for the double-star graph we get $\bar{T}(\mathcal{G})=O(m^2)$. 
\begin{align}\label{eq:matrix-double-star}
 \resizebox{1\hsize}{0.07\vsize}{$
  \left( {\begin{array}{ccccccccccccc}
    -\frac{2}{m} & \frac{1}{m} & \frac{2}{m} & 0 & 0 & 0 & 0 & 0 & 0 & 0 & 0 & 0 & 0\\
    -\frac{k - i + 1}{m} & -\frac{i + 1}{m} & 0 & 0 & \frac{1}{m} & 0 & 0 & \frac{k}{m} & 0 & 0 & 0 & 0 & \frac{1}{m}\\
    -\frac{i + k - m}{m} & 0 & -\frac{m - i + 2}{m} & 0 & 0 & 0 & \frac{k}{m} & 0 & \frac{1}{m} & 0 & 0 & 0 & \frac{1}{m}\\
     0 & 0 & 0 & -\frac{m - i + 2}{m} & 0 & \frac{1}{m} & 0 & -\frac{i + k - m}{m} & 0 & 0 & \frac{k - 1}{m} & 0 & \frac{1}{m}\\
     0 & \frac{1}{m} & 0 & 0 & -\frac{m - i + 1}{m} & 0 & 0 & \frac{k}{m} & 0 & 0 & 0 & -\frac{i + k - m + 1}{m} & 0\\
     0 & 0 & 0 & \frac{1}{m} & 0 & -\frac{i + 1}{m} & -\frac{k - i + 1}{m} & 0 & 0 & 0 & \frac{k - 1}{m} & 0 & \frac{1}{m}\\
     0 & 0 & \frac{1}{m} & 0 & 0 & \frac{1}{m} & -\frac{3}{m} & 0 & \frac{1}{m} & 0 & 0 & 0 & 0 \\
     0 & \frac{1}{m} & 0 & \frac{1}{m} & \frac{1}{m} & 0 & 0 & -\frac{3}{m} & 0 & 0 & 0 & 0 & 0\\
     0 & 0 & \frac{1}{m} & 0 & 0 & 0 & \frac{k}{m} & 0 & -\frac{i}{m} & -\frac{k - i + 2}{m} & 0 & 0 & 0\\
     0 & 0 & 0 & 0 & 0 & 0 & 0 & 0 & \frac{2}{m} & -\frac{2}{m} & 0 & 0 & 0\\
     0 & 0 & 0 & \frac{2}{m} & 0 & \frac{2}{m} & 0 & 0 & 0 & 0 & -\frac{4}{m} & 0 & 0\\
     0 & 0 & 0 & 0 & \frac{2}{m} & 0 & 0 & 0 & 0 & 0 & 0 & -\frac{2}{m} & 0\\
     0 & -\frac{i + k - m}{m} & -\frac{k - i + 1}{m} & \frac{k}{m} & 0 & \frac{k}{m} & 0 & 0 & 0 & 0 & 0 & 0 & -1\\
  \end{array} } \right) \left( {\begin{array}{c} 
\bar{M}(x,y) \\ \bar{M}(x_1,y) \\ \bar{M}(x,y_1) \\ \\ \\ \\ \vdots \\ \vdots \\ \\ \\ \\ \\ \bar{M}(x_1,y_1) \\  \end{array} } \right)=  -\left( {\begin{array}{c} 
1 \\ 1 \\ 1 \\ 1 \\ 1 \\ 1 \\ 1 \\ 1 \\ 1 \\ 1 \\ 1 \\ 1 \\ 1 \\  \end{array} } \right).$}
\end{align}
\end{proof}

\begin{figure}[htb]
\vspace{-4cm}
\begin{center}
\includegraphics[totalheight=.3\textheight,
width=.4\textwidth,viewport=0 0 950 950]{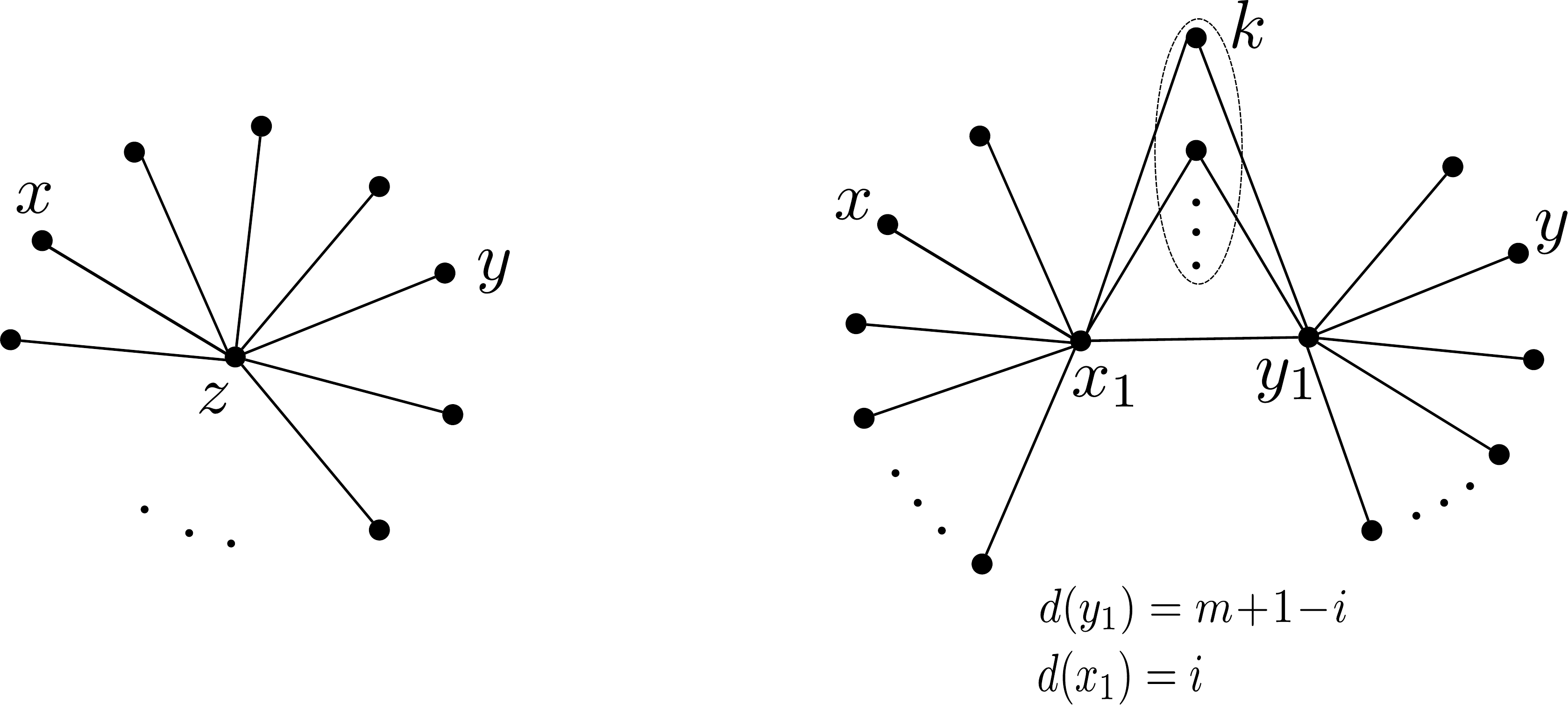} 
\hspace{0.4in}\end{center}
\vspace{-0.5cm}
\caption{Star graph and double-star graph with $m$ edges.\label{fig:double_star}}
\end{figure}

\begin{proof}[proof of Lemma \ref{lemm:largest-laplacian-eigenvalue}]
We use the upper bound of $\alpha_1(\mathcal{L}_{\mathcal{G}})\leq \max\{d(u)+d(v)| (u,v)\in \mathcal{E}(\mathcal{G})\}$ given in \cite{anderson1985eigenvalues}. Since we already assumed $d(u)+d(v)\leq m, \forall (u,v)\in \mathcal{E}(\mathcal{G})$, we consider two cases:
\begin{itemize}
\item $\max\{d(u)+d(v)| (u,v)\in \mathcal{E}(\mathcal{G})\}\leq m-1$. Then, we simply get $\alpha_1(\mathcal{L}_{\mathcal{G}})\leq m-1\leq m-\frac{1}{2}$ 
\item $\max\{d(u)+d(v)| (u,v)\in \mathcal{E}(\mathcal{G})\}=m$. Then, there exists $(u^*, v^*)\in \mathcal{E}(\mathcal{G})$ such that $d(u^*)+d(v^*)=m$. In this case we used the upper bound of $\alpha_1(\mathcal{L}_{\mathcal{G}})\leq 2+ 	
\sqrt{(m -2)(s -2)}$ given in \cite{jiong1997new}, where $s=\max\{d(u)+d(v)|(u, v)\in \mathcal{E}(\mathcal{G})\setminus (u^*,v^*)\}$. Since $d(u^*)+d(v^*)=\max\{d(u)+d(v)| (u,v)\in \mathcal{E}(\mathcal{G})=m$, there exists exactly one edge in $\mathcal{G}\setminus \{u^*,v^*\}$ such that for any $(u, v)\in \mathcal{E}(\mathcal{G})\setminus (u^*,v^*)$ we must have $d(u)+d(v)\leq m-1$. This shows that $s\leq m-1$, and hence,
\begin{align}\nonumber
\alpha_1(\mathcal{L}_{\mathcal{G}})&\!\leq\!2\!+\!\sqrt{(m\!-\!2)(m\!-\!3)}\!=\!2\!+\!\sqrt{(m \!-\!\frac{5}{2})^2\!-\!\frac{1}{4}}\cr 
&\!<\!m\!-\!\frac{1}{2}.
\end{align}  
\end{itemize} 
Therefore, in both cases we have $\alpha_1(\mathcal{L}_{\mathcal{G}})\leq m-\frac{1}{2}$, and this completes the proof.
\end{proof}

\bigskip
\section{Tight bounds for simple static networks}\label{ap-static-alternative-proofs}

\smallskip
In this Appendix, we develop an alternative approach in order to study the maximum expected convergence time of unbiased quantized consensus over static graphs. In particular, we identify the precise order of the maximum expected convergence time for the case of simple static graphs such as line graph and cycle. Here we note that, although the result of this appendix works well when we benefit from inherent symmetry in the underlying graph $\mathcal{G}$, in general it does not lead to an explicit tight bound based on the parameters of the network.

\begin{definition}
A \textit{birth-and-death chain} of length $n+1$ has state space $\Omega=\{0, 1, \ldots, n\}$ such that in one step the state can increase or decrease by at most 1.
\end{definition}

\smallskip
\begin{lemma}\label{Lemm:slow-Markov}
Assume that $\mathcal{G}$ is a connected graph with diameter $D$. Then, $\bar{T}(\mathcal{G})$ is bounded from above by the maximum hitting time of a birth-and-death chain of length $D+1$ and positive transition probabilities greater than $\frac{1}{m}$. 
\end{lemma}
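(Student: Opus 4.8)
The plan is to collapse the two-dimensional joint walk of the \emph{original} process (Definition \ref{def:original-process-static}) onto the single scalar $d_t := \dist(X_t, Y_t)$, the graph distance between the two walkers, and to dominate the absorption time of $d_t$ by that of a genuine birth-and-death chain. The walkers meet precisely when they occupy adjacent vertices and the connecting edge is selected; in terms of $d_t$ this is exactly the $1\!\to\!0$ step, so $\bar{T}(\mathcal{G})$ equals the maximum over starting pairs $(x,y)$ of the expected time for $d_t$ to reach $0$. Two structural facts make $d_t$ amenable to a one-dimensional comparison. First, in a single step at most one walker moves across one edge, and moving an endpoint of a geodesic to a neighbor changes the graph distance by at most $1$; hence $|d_{t+1}-d_t|\le 1$, which is the birth-and-death structure. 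Second, $d_t\le D$ for all $t$, so the comparison chain needs only the $D+1$ states $\{0,1,\dots,D\}$, with $0$ absorbing and $D$ reflecting, explaining the length $D+1$.

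Next I would bound the downward probability from below, uniformly over configurations. If $d_t=k\ge 1$, fix a geodesic $x=v_0,v_1,\dots,v_k=y$; selecting the edge $\{v_0,v_1\}$ (or $\{v_{k-1},v_k\}$) moves a walker along the geodesic and decreases the distance, and each edge is chosen with probability $1/m$. Thus the probability that $d_t$ decreases is at least $1/m$ for \emph{every} configuration at distance $k$ (in fact at least $2/m$ once $k\ge 2$, the two geodesic edges then being distinct, while the $1\!\to\!0$ meeting step has probability exactly $1/m$). This uniform lower bound is the source of the phrase ``transition probabilities greater than $\tfrac1m$'': the down-rate of the comparison chain can be taken to be $1/m$.

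The core step is stochastic domination. The process $d_t$ is \emph{not} itself Markov: its increment law depends on the full positions $(X_t,Y_t)$, not on $d_t$ alone. I would therefore introduce a fixed birth-and-death chain $\{W_t\}$ on $\{0,\dots,D\}$ whose interior states have down-probability $1/m$ and up-probability $q:=\max_k q_k$, where $q_k$ is an upper bound, over all configurations at distance $k$, on the true probability that $d_t$ increases; the assumption $d(u)+d(v)\le m$ guarantees $q+\tfrac1m\le 1$, so this is a valid transition law. Since at each distance the conditional increment of $d_t$ is stochastically dominated by the increment of $W$, a standard monotone (inverse-CDF, common-randomness) coupling propagates the inequality $W_t\ge d_t$ by induction from $W_0=D\ge d_0$, including the delicate gap-one case, which uses exactly $q\le 1-\tfrac1m$. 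Consequently the absorption time of $d_t$ is dominated by that of $W$, and taking expectations from the extremal start yields
\begin{align}\nonumber
\bar{T}(\mathcal{G})\ \le\ \mathbb{E}\big[\tau_0^{W}\mid W_0=D\big],
\end{align}
which is the maximum hitting time of the asserted chain.

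The main obstacle is precisely this non-Markovianity: one must control the upward transition probabilities uniformly over all configurations realizing a given distance, and it is here that generality is lost. The bound $q_k$ can be close to $1$ for graphs with high-degree vertices, rendering $W$ slow and the estimate loose; only when $\mathcal{G}$ is sufficiently symmetric — so that the true increment law of $d_t$ depends on the distance alone, as for the line graph and the cycle — do $q$ and the down-rate become comparable and the birth-and-death bound sharp, consistent with the remark that this approach ``works well when we benefit from inherent symmetry.'' A secondary point to verify is the boundary bookkeeping at state $D$ and the faithful identification of the $1\!\to\!0$ step with the genuine meeting event, so that the absorbing chain correctly records the first nontrivial update.
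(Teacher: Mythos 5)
Your proposal is correct and takes essentially the same route as the paper: both collapse the joint walk onto the inter-walker graph distance, lower-bound the downward probability uniformly by $\tfrac{1}{m}$ via an edge of a geodesic, and dominate the (non-Markovian) distance process by a worst-case birth-and-death chain on $\{0,1,\dots,D\}$, with your explicit monotone coupling simply making rigorous the paper's informal ``worst-case scenario'' domination argument. One minor remark: your appeal to the assumption $d(u)+d(v)\le m$ is unnecessary here, since at any configuration the upward event is disjoint from the downward event, which has probability at least $\tfrac{1}{m}$, so $q\le 1-\tfrac{1}{m}$ holds automatically (and indeed the paper proves this lemma for the original process without invoking the virtual-process assumption).
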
 
\begin{proof}
We partition all the different states of the above original coupled random walks (Definition \ref{def:original-process-static}) into different classes. Here, we refer to each state as a possible pair of positions of the walkers in the network. For each state $x$ we define $d(x^{(0)}, x^{(2)})$ to be the length of the shortest path between walker 0 and walker 2 in state $x$. Let
\begin{align}\nonumber
S_{\ell}=\{x\in \mathcal{X(\mathcal{G})}|\ \ d(x^{(0)}, x^{(2)})=\ell\}, \ \ell=0, 1, \ldots, D. 
\end{align}
It is clear that $\{S_{\ell}\}_{\ell=1}^{D}$ is a partitioning of all the states. Furthermore, $S_0$ contains just one state. In other words, when we reach class $S_0$, it means that the walkers have met. Now, we introduce a new Markov chain, by letting each class to be a single state by itself, and we denote it by $S_{\ell}$. Finally, we assign the following transition probabilities to the new Markov chain. For each $\ell=1, 2, \ldots, D$, let
\vspace{0.3cm}
\begin{itemize}
\item[1.]  $ \mathbb{P}\{S_{\ell}\rightarrow S_{\ell-1}\}= \underset{x\in S_{\ell}, y\in S_{\ell-1}}{\min} \mathbb{P}\{x\rightarrow y\}, $ \\
\item[2.]  $ \mathbb{P}\{S_{\ell}\rightarrow S_{\ell}\}= \underset{x\in S_{\ell}}{\min}\mathbb{P}\{x\rightarrow x\}, $ \\ 
\item[3.]  $ \mathbb{P}\{S_{\ell}\!\rightarrow \!S_{\ell+1}\}\!=\!1-\underset{x\in S_{\ell}}{\min}\mathbb{P}\{x\!\rightarrow\! x\}-\!\!\!\!\underset{x\in S_{\ell}, y\in S_{\ell-1}}{\min}\!\!\!\!\!\mathbb{P}\{x\rightarrow y\}. $
\end{itemize}
\vspace{0.3cm}
Also, note that $\mathbb{P}\{S_{\ell}\!\rightarrow \!S_{\ell+1}\}\ge 0$, and hence the above transition probabilities are well defined. In fact, assigning the above transition probabilities for the new chain is based on a worst case scenario which keeps the walkers away from each other for the longest period of time. In other words, this new birth-and-death chain slows down the progress of moving the walkers toward each other. As an example, given that the distance of the walkers at the current time instant is $\ell$, i.e., $d(x^{(0)}, x^{(2)})=\ell$, (and hence $x\in S_{\ell}$) the probability that at the next time instant the walkers will get closer to each other in the original process (in the new birth-and-death chain this means that $x$ moves from $S_{\ell}$ to $S_{\ell-1}$) is at least as large as that in the new birth-and-death chain. Therefore, it is not hard to see that the probability that the walkers in the original process meet over every sample path is at least as large as the probability of the equivalent associated sample path in the new birth-and-death process to hit the class $S_0$. Hence, the expected time to hit the state $S_0$ is always an upper bound for meeting time in the original coupled process. Finally, we note that since in the original process each edge is chosen with probability $\frac{1}{m}$, the above assigned probabilities cannot be smaller than $\frac{1}{m}$.   
\end{proof}

\smallskip
\begin{corollary}\label{corr:cycle}
Assume that $\mathcal{G}$ is a cycle of $n$ nodes. Then, 
\begin{align}\nonumber
\bar{T}(\mathcal{G})\leq \frac{n(n-1)(n-3)}{16}+\frac{2n+1}{2}.
\end{align}
\end{corollary}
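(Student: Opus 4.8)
The plan is to invoke Lemma \ref{Lemm:slow-Markov} to replace the coupled two-walker process on the cycle by a single birth-and-death chain on the \emph{distance} between the walkers, and then to solve that chain's hitting time exactly. For the cycle $C_n$ we have $m=n$ edges and diameter $D=\lfloor n/2\rfloor$, so the associated chain lives on the state space $\{0,1,\ldots,D\}$, where state $\ell$ records the graph distance $d(x^{(0)},x^{(2)})$ between the two walkers and state $0$ is the absorbing meeting class $S_0$. By Lemma \ref{Lemm:slow-Markov}, $\bar{T}(\mathcal{G})$ is bounded above by the expected time for this chain, started from the worst initial state $\ell=D$, to reach $0$, so it suffices to compute (or upper bound) that single hitting time.

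First I would pin down the transition probabilities of the chain, exploiting the vertex-transitivity of $C_n$ so that all configurations at a fixed distance $\ell$ are equivalent and the minima appearing in Lemma \ref{Lemm:slow-Markov} are attained by the common value of the class. Since each walker has degree $2$ and each edge is selected with probability $\tfrac{1}{n}$, from an interior state $\ell$ the distance decreases by one with probability $\tfrac{2}{n}$ (either walker steps toward the other) and increases by one with probability $\tfrac{2}{n}$, the chain holding otherwise. The two boundary states require separate care: at the maximal distance $\ell=D$ the chain is reflecting, and the precise reflecting probability depends on the parity of $n$ (for odd $n$, $D=\tfrac{n-1}{2}$ and a step "away" leaves the shortest-path distance unchanged, whereas for even $n$, $D=\tfrac{n}{2}$ and every move strictly decreases the distance), while at $\ell=1$ the passage into the meeting class is controlled by the single connecting edge.

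With these probabilities in hand I would compute the hitting time $h_\ell$ to state $0$ by first-step analysis, working with the one-step differences $\Delta_\ell:=h_\ell-h_{\ell-1}$. The interior balance equation $-p_\ell\Delta_{\ell+1}+q_\ell\Delta_\ell=1$ collapses, for the symmetric interior probabilities $p_\ell=q_\ell=\tfrac{2}{n}$, to the simple recursion $\Delta_\ell=\Delta_{\ell+1}+\tfrac{n}{2}$, which together with the reflecting value of $\Delta_D$ forces $\Delta_\ell$ to be linear in $D-\ell$. Summing telescopically, $h_D=\sum_{\ell=1}^{D}\Delta_\ell$, the interior contribution is a sum of the form $\tfrac{n}{2}\sum_{k}k$, which evaluates to the cubic leading term $\tfrac{nD(D-1)}{4}$; substituting $D=\lfloor n/2\rfloor$ turns this into $\tfrac{n(n-1)(n-3)}{16}$, matching the leading term of the claimed bound.

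The main obstacle is the careful accounting of the two boundary states, which is exactly what produces the lower-order correction $\tfrac{2n+1}{2}$: the reflecting term $\Delta_D$ and the special transition at distance $1$ each contribute an additive term, and one must track the parity of $n$ to check that the stated expression dominates the telescoping sum in every case. I would therefore isolate the boundary contributions separately, bound them by $\tfrac{2n+1}{2}$, and combine them with the interior cubic term to obtain $\bar{T}(\mathcal{G})\le\tfrac{n(n-1)(n-3)}{16}+\tfrac{2n+1}{2}$. The remaining steps are routine summation and simplification, so the delicate part is entirely the boundary bookkeeping rather than any structural difficulty.
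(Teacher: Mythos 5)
Your structural plan is exactly the paper's: the paper's proof of this corollary consists of applying Lemma \ref{Lemm:slow-Markov} to the cycle, reading the resulting birth-and-death chain off Figure \ref{fig:slow-chain}, and asserting via an unshown ``simple calculation'' that $H(S_{[\frac{n-1}{2}]},S_0)$ equals the stated expression. Your interior analysis of that chain is correct and more explicit than the paper's: with up/down probability $\tfrac{2}{n}$ at interior distances and, for odd $n$, holding probability $1-\tfrac{2}{n}$ at the top state $D=\tfrac{n-1}{2}$, the difference recursion gives $\Delta_D=\tfrac{n}{2}$ and $\Delta_\ell=(D+1-\ell)\tfrac{n}{2}$ for $\ell\ge 2$, whose telescoped sum is $\tfrac{nD(D-1)}{4}=\tfrac{n(n-1)(n-3)}{16}$, the claimed leading term.

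The gap is your final step: you propose to ``bound the boundary contributions by $\tfrac{2n+1}{2}$,'' but your own setup rules this out. As you correctly note, at $\ell=1$ the passage into $S_0$ is controlled by the single connecting edge, so $q_1=\tfrac{1}{n}$ while $p_1=\tfrac{2}{n}$; the balance equation $q_1\Delta_1-p_1\Delta_2=1$ then forces $\Delta_1=n+2\Delta_2=n+n(D-1)=nD=\tfrac{n(n-1)}{2}$ for odd $n$ --- a $\Theta(n^2)$ boundary term, not $O(n)$. Completing your telescoping honestly yields $h_D=\tfrac{n(n-1)(n-3)}{16}+\tfrac{n(n-1)}{2}$, strictly larger than the stated bound for every $n\ge 4$, and no amount of parity bookkeeping can repair this: on the cycle the distance process is an \emph{exact} lumping of the original coupled walk by vertex-transitivity, and direct computation gives $\bar{T}(\mathcal{G})=7$ for $n=4$ (versus the claimed $5.25$) and $\bar{T}(\mathcal{G})=12.5$ for $n=5$ (versus the claimed $8$). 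So the inequality as printed is not derivable from this chain at all; the discrepancy sits inside the paper's elided calculation (the additive term consistent with the chain is $\tfrac{n(n-1)}{2}$ rather than $\tfrac{2n+1}{2}$, which leaves $\bar{T}(\mathcal{G})=\Theta(n^3)$ and Corollary \ref{corr:equality-line-graph} unaffected). Your proposal should either derive this corrected lower-order term or flag the stated constant as unobtainable; as written, the promised $\tfrac{2n+1}{2}$ boundary bound is a step that would fail.
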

\vspace{0.35cm}
\begin{proof}
Analyzing the birth-and-death chain described in Lemma \ref{Lemm:slow-Markov} for a cycle with $n$ nodes, we can bound $\bar{T}(\mathcal{G})$ from above. For such a graph, the new Markov chain has the structure shown in Figure \ref{fig:slow-chain}.

\begin{figure}[htb]
\vspace{-3cm}
\begin{center}
\includegraphics[totalheight=.25\textheight,
width=.3\textwidth,viewport=0 0 500 500]{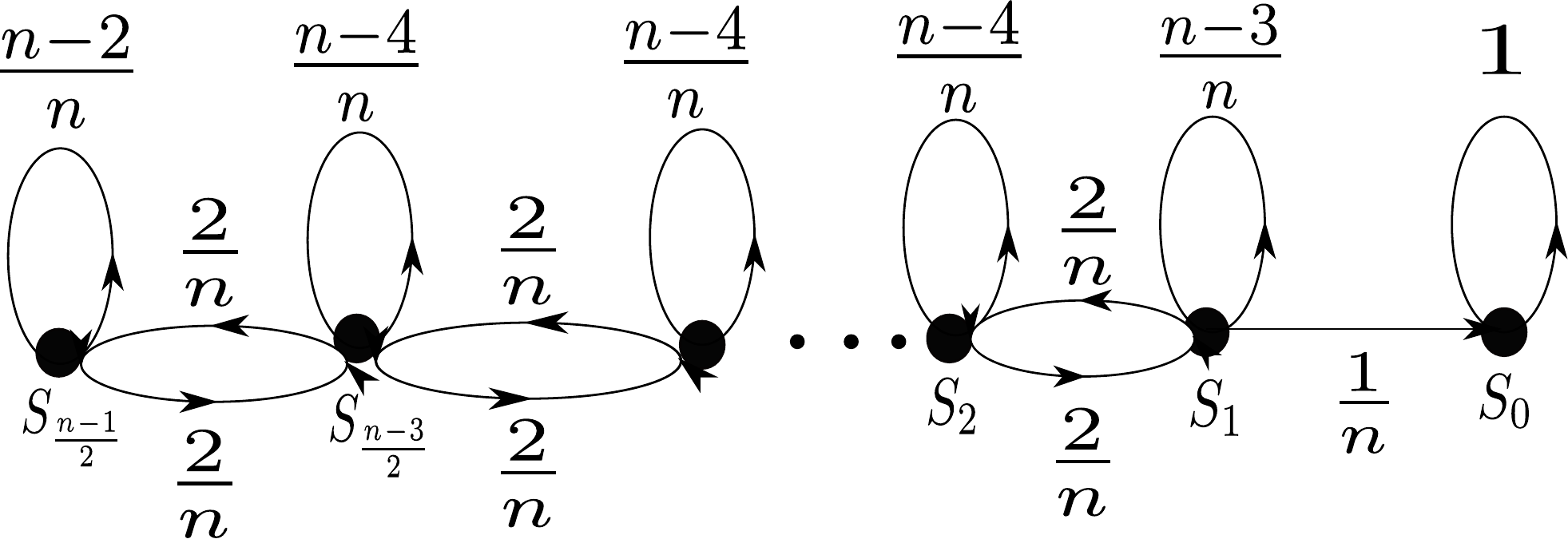} \hspace{0.4in}
\end{center}
\vspace{-0.5cm}
\caption{Birth-and-death chain for a cycle with $n$ nodes.\label{fig:slow-chain}} 
\end{figure}

Therefore, $\bar{T}(\mathcal{G})$ is bounded from above by $H(S_{[\frac{n-1}{2}]},S_0)$ in the birth-and-death diagram of Fig. \ref{fig:slow-chain}. A simple calculation shows that $H(S_{[\frac{n-1}{2}]},S_0)=\frac{n(n-1)(n-3)}{16}+\frac{2n+1}{2}$ and the result follows. 
\end{proof}  
 
\smallskip 
\begin{corollary}\label{corr:path}
If $\mathcal{G}$ is a line graph with $n$ nodes, then
\begin{align}\nonumber
\bar{T}(\mathcal{G})\leq \frac{(n-1)^2(n+1)}{4}.
\end{align}
\end{corollary}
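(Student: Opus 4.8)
The plan is to specialize the dominating birth-and-death chain of Lemma \ref{Lemm:slow-Markov} to the line graph, exactly as was done for the cycle in Corollary \ref{corr:cycle}, and then to solve the resulting hitting-time recursion in closed form. For a line graph with $n$ nodes we have $m=n-1$ edges and diameter $D=n-1$, so the dominating chain lives on the distance classes $S_0,S_1,\ldots,S_{n-1}$, where $S_\ell$ collects the configurations in which the two walkers are at graph distance $\ell$. I would first read off the worst-case transition probabilities prescribed by Lemma \ref{Lemm:slow-Markov}. When the walkers are at distance $\ell\ge 2$, each of them has exactly one incident edge that moves it toward the other, so the distance decreases with probability $\tfrac{2}{m}$; symmetrically the outward edges increase the distance, and the remaining edges leave it unchanged. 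The crucial boundary effect occurs at $\ell=1$, where the two walkers meet only when the single connecting edge is drawn, which happens with probability $\tfrac{1}{m}$ rather than $\tfrac{2}{m}$. Since the chain is absorbing at $S_0$ its hitting time is monotone in the starting class, so $\bar{T}(\mathcal{G})$ is controlled by $h_{n-1}:=H(S_{n-1},S_0)$, the time starting from the two endpoints.

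Next I would account for the irregular states. Taking the minimal holding probability over each class (as the lemma demands) gives a clean bulk rule: for $2\le\ell\le n-3$ the chain moves down with probability $\tfrac{2}{m}$, up with probability $\tfrac{2}{m}$, and holds with probability $1-\tfrac{4}{m}$; the states $\ell=1$, $\ell=n-2$ and $\ell=n-1$ instead carry reduced up-probabilities, because a walker sitting at an endpoint of the path cannot step outward. Writing $h_\ell=H(S_\ell,S_0)$ and introducing the first differences $D_\ell=h_\ell-h_{\ell-1}$ (with $h_0=0$), the balance equation in the bulk telescopes to the linear recursion $D_{\ell+1}=D_\ell-\tfrac{m}{2}$, while the three boundary equations pin down the special values $D_{n-1}=\tfrac{m}{2}$, $D_{n-2}=\tfrac{3m}{4}$ and $D_1=\tfrac{(2n-3)m}{2}$.

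Solving the recursion then yields $D_\ell=\tfrac{m}{4}\,(2n-1-2\ell)$ for $2\le\ell\le n-2$, and summing the telescoped differences gives
\begin{align}\nonumber
\bar{T}(\mathcal{G})\le h_{n-1}=\sum_{k=1}^{n-1}D_k=(n-1)m\cdot\frac{n+1}{4}=\frac{(n-1)^2(n+1)}{4},
\end{align}
after substituting $m=n-1$. I expect the main obstacle to be the bookkeeping at the boundary rather than in the bulk: the meeting probability from $S_1$ is only $\tfrac{1}{m}$, the up-probabilities at $S_{n-2}$ and $S_{n-1}$ are cut down by the endpoint structure, and the minimal-holding-probability convention of Lemma \ref{Lemm:slow-Markov} must be applied class by class. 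Checking that the interior pattern $D_\ell=\tfrac{m}{4}(2n-1-2\ell)$ stays consistent with the genuinely exceptional value $D_{n-1}=\tfrac{m}{2}$ (the interior formula would predict $\tfrac{m}{4}$) is precisely what upgrades the easy leading-order estimate $\Theta(n^3)$ to the exact constant $\tfrac14$ in the stated bound; the small cases $n=3$ and $n=4$ can be verified directly as a sanity check.
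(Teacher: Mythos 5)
Your proposal is correct and takes essentially the same route as the paper: the paper's proof of Corollary \ref{corr:path} simply says to repeat the argument of Corollary \ref{corr:cycle}, i.e., specialize the dominating birth-and-death chain of Lemma \ref{Lemm:slow-Markov} to the line graph and solve the hitting-time recursion, which is exactly what you do, and your numbers check out (bulk differences $D_\ell=\frac{m}{4}(2n-1-2\ell)$, boundary values $D_{n-1}=\frac{m}{2}$, $D_{n-2}=\frac{3m}{4}$, $D_1=\frac{m(2n-3)}{2}$, summing to $h_{n-1}=\frac{m(n^2-1)}{4}=\frac{(n-1)^2(n+1)}{4}$ with $m=n-1$). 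One cosmetic slip worth fixing: at $\ell=1$ it is the \emph{down}-probability that is reduced (to $\frac{1}{m}$, since the walkers merge only via the single connecting edge) while the up-probability stays $\frac{2}{m}$, not a "reduced up-probability" as you wrote — but your value of $D_1$ is derived from the correct probabilities, so the computation is unaffected.
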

\begin{proof}
This follows from a similar argument as in the proof of Corollary \ref{corr:cycle}, and the bound coincides with the result given in \cite{Kashyab}.  
\end{proof} 

\smallskip
\begin{corollary}\label{corr:equality-line-graph}
For a line graph and cycle with $n$ nodes, we have $C_1n^3\leq\bar{T}(\mathcal{G})\leq C_2n^3$, where $0<C_1<C_2$ are two constants.
\end{corollary}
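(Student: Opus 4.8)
The plan is to prove the two inequalities separately, treating the upper bound as essentially already available and concentrating the real work on the matching lower bound. For the upper bound I would simply invoke Corollary~\ref{corr:cycle} and Corollary~\ref{corr:path}: the former gives $\bar{T}(\mathcal{G}) \leq \frac{n(n-1)(n-3)}{16} + \frac{2n+1}{2}$ for the cycle and the latter gives $\bar{T}(\mathcal{G}) \leq \frac{(n-1)^2(n+1)}{4}$ for the line graph, and both of these are bounded by $C_2 n^3$ for a suitable constant $C_2$ and all large $n$. Thus $C_2$ can be read off directly, and no further argument is needed for the upper bound.

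For the lower bound the key tool is the left inequality of Theorem~\ref{thm:end-main}, namely $\bar{T}(\mathcal{G}) \geq \frac{1}{2} H_{\mathcal{Z}}$, so it suffices to exhibit a single pair of vertices $x,y$ for which $H_{\mathcal{Z}}(x,y) \geq c\,n^3$. Here I would exploit the fact that on both the line graph and the cycle every vertex has degree at most $2$, so that $\frac{1}{d(i)} \geq \frac{1}{2}$ for all $i$. Inserting this into the decomposition \eqref{eq:partition-lazy} and using the identity $\sum_i G_{\tau^x_y}(i) = H(x,y)$ (the total expected number of visits of the embedded simple random walk equals its expected hitting time), I obtain
\begin{align}\nonumber
H_{\mathcal{Z}}(x,y) = H(x,y) + m\sum_i \frac{G_{\tau^x_y}(i)}{d(i)} \geq \left(1 + \frac{m}{2}\right) H(x,y) \geq \frac{m}{2}\, H(x,y),
\end{align}
which reduces everything to a lower bound on the hitting time $H$ of the ordinary simple random walk.

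The final step is to insert the standard simple-random-walk hitting times. For the line graph with endpoints $x,y$ one has $H(x,y) = (n-1)^2$ (gambler's ruin, equivalently the commute-time identity with $\mathcal{R}(x\leftrightarrow y) = n-1$ and $m = n-1$), giving $H_{\mathcal{Z}}(x,y) \geq \frac{(n-1)^3}{2}$. For the cycle with $x,y$ taken antipodal one has $H(x,y) = \lfloor n/2\rfloor\,\lceil n/2\rceil = \Theta(n^2)$ and $m = n$, giving $H_{\mathcal{Z}}(x,y) = \Theta(n^3)$. In either case $\bar{T}(\mathcal{G}) \geq \frac{1}{2} H_{\mathcal{Z}} \geq \frac{1}{2} H_{\mathcal{Z}}(x,y) \geq C_1 n^3$ for a suitable constant $C_1 > 0$ and all large $n$, which together with the upper bound completes the argument.

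I expect no serious obstacle, since the overall architecture is entirely supplied by Theorem~\ref{thm:end-main} and the two preceding corollaries; the only genuinely new ingredients are the elementary simple-random-walk hitting-time computations together with the uniform degree bound $d(i) \leq 2$. The one point deserving care is to confirm that the chosen pairs (path endpoints, cycle antipodes) really do achieve $\Theta(n^2)$ simple-walk hitting time rather than something smaller, and to handle the floor/ceiling in the cycle case so that $C_1$ is genuinely positive for every $n$; both are routine.
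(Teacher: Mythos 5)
Your proposal is correct and follows essentially the same route as the paper: the upper bound is read off from Corollaries \ref{corr:cycle} and \ref{corr:path}, and the lower bound combines $\bar{T}(\mathcal{G})\geq\frac{1}{2}H_{\mathcal{Z}}$ from Theorem \ref{thm:end-main} with the degree bound $d(i)\leq 2$ applied to the decomposition \eqref{eq:partition-lazy} to obtain $H_{\mathcal{Z}}(x,y)\geq \Theta(m)\,H(x,y)$, then inserts the standard $\Theta(n^2)$ simple-walk hitting times. The only cosmetic differences are that you apply the degree bound to the Green-function form of \eqref{eq:partition-lazy} via $\sum_i G_{\tau^x_y}(i)=H(x,y)$ whereas the paper uses the equivalent effective-resistance form through \eqref{eq:hitting}, and that you work out the cycle case explicitly where the paper merely remarks that it is similar.
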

\begin{proof}
We prove the result for the line graph; for cycle graph the proof is similar. From \eqref{eq:hitting}, since the degree of each node is at most 2, we have 
\begin{align}\nonumber
2\sum_{i}\big[\mathcal{R}\big(x\leftrightarrow y\big)+\mathcal{R}\big(y \leftrightarrow i\big)-\mathcal{R}\big(x\leftrightarrow i\big)\big]\ge H(x,y).
\end{align}
Replacing this inequality in \eqref{eq:partition-lazy}, and since $m=n-1$, we get 
\begin{align}\nonumber
H_{\mathcal{Z}}(x,y)\ge \frac{n+3}{4}H(x,y).
\end{align}
Also, using Theorem \ref{thm:end-main} we can write 
\begin{align}\nonumber
\bar{T}(\mathcal{G})&\ge\frac{1}{2}H_{\mathcal{Z}}\!\ge\!\frac{n\!+\!3}{8}\max_{x,y} H(x,y)=\frac{n\!+\!3}{8}(n\!-\!1)^2. 
\end{align}  
This relation in view of Corollary \ref{corr:path} completes the proof. 
\end{proof}

\bibliographystyle{IEEEtran}
\bibliography{thesisrefs}
\end{document}